\renewcommand\footnotetextcopyrightpermission[1]{} 
\begin{document}
\title{The Price of Fragmentation in Mobility-on-Demand Services}

\author{Thibault S\'ejourn\'e}
\affiliation{%
  \institution{Ecole Polytechnique}
  \country{France}
}
\email{thibault.sejourne@polytechnique.edu}

\author{Samitha Samaranayake}
\affiliation{%
  \institution{Cornell University}
  \country{USA}
}
\email{samitha@cornell.edu}

\author{Siddhartha Banerjee}
\affiliation{%
  \institution{Cornell University}
  \country{USA}
}
\email{sbanerjee@cornell.edu}

\renewcommand{\shortauthors}{T. S\'ejourn\'e et al.}

\begin{abstract}
Mobility-on-Demand platforms are a fast growing component of the urban transit ecosystem.
Though a growing literature addresses the question of how to make individual MoD platforms more efficient, much less is known about the cost of market fragmentation, i.e., the impact on welfare due to splitting the demand between multiple independent platforms. Our work aims to quantify how much platform fragmentation degrades the efficiency of the system. In particular, we focus on a setting where demand is exogenously split between multiple platforms, and study the increase in the supply rebalancing cost incurred by each platform to meet this demand, vis-a-vis the cost incurred by a centralized platform serving the aggregate demand. 
We show under a large-market scaling, this {\em Price-of-Fragmentation} undergoes a phase transition, wherein, depending on the nature of the exogenous demand, the additional cost due to fragmentation either vanishes or grows unbounded. We provide conditions that characterize which regime applies to any given system, and discuss implications of this on how such platforms should be regulated.
\end{abstract}

\keywords{Mobility on Demand systems, platform competition, large-market scaling, phase transition}

\maketitle
\thispagestyle{empty}


\section{Introduction}
\label{sec:intro}

The ability to provide efficient, cost effective and sustainable urban transportation has been a major challenge for cities throughout the world in recent decades. This challenge is becoming even more difficult with the rapid urbanization we are observing at the moment - two thirds of the world population will live in urban areas by 2050~\cite{UN14}. Traditionally, urban mobility has been satisfied via either personal vehicle ownership or mass transit systems. Personal vehicles provide a very convenient means for door to door transportation, but induces high per-passenger negative externalities in terms of congestion, pollution and need for parking~\cite{TTI15}, making it a solution that does not scale well. Mass transit systems on the other hand are an efficient means to moving masses through regions of high spatio-temporal demand density, but are prone to first-and-last-mile problems.    

Mobility-on-Demand (MoD) systems (e.g. Uber, Lyft, Didi, Ola) provide a new alternative mode of transportation that preserves the convenience of personal vehicle ownership, while simultaneously reducing the burdens of car ownership. The growing popularity of these services are a testament to the perceived benefit of the systems by consumers. While the jury is out on whether these services currently help or hurt cities in terms of congestion and other externalities, they have the potential to help the overall efficiency of the transportation system as a whole, especially when these services are used to serve the first-and-last-mile for public transit and when shared rides occur (e.g. Lyft Line and UberPOOL).

The popularity of MoD systems has also given rise to a growing academic literature on these systems, in particular, with respect to fleet management, rider matching and pricing problems. However, there has been limited work in understanding the implications of multiple service providers (i.e., platforms) in the system. 
The economic benefits of competition for passengers and drivers (in terms of prices, wages earned, quality of service etc.) are well-established; sustaining this competition may however result in much higher overall operational costs.
In particular, the concern is that when multiple platforms fragment demand, and myopically optimize their own systems to serve their own customers, they may potentially result in a higher cost to society (in terms of pollution/fuel consumption/congestion, etc) as compared to a monopoly. 

Our work aims to understand and quantify the increased operational costs due to competition in MoD ecosystems. At the outset, we note that are two main challenges to doing so.
First, since MoD systems involve complex dynamics of many vehicles over large networks, understanding this cost requires careful modeling of demand/supply/rebalancing, and careful analysis that incorporates the underlying stochastic dynamics. Secondly, given the growth of MoD systems, the pertinent question not so much what the costs are currently, but rather, how they will behave as demand scales in the future; such counterfactual analysis is not directly accessible from the data. The techniques we develop address both these questions.

In more detail: we approach this problem by defining a novel notion of the \emph{Price of Fragmentation} -- a metric that captures how the efficiency of a MoD ecosystem degrades as demand is randomly split between platforms.
We focus on a setting where passengers are exogenously split between multiple platforms (i.e. we ignore endogenous competition effects between platforms), and study the increase in the system cost incurred by each platform to meet this demand, vis-a-vis the cost incurred by a monopolist platform serving the aggregate demand. Such a model allows us to study the scaling of the system costs as the demand for MoD systems grow. By doing so, we uncover a surprising phase-transition phenomena (see Figure.~\ref{fig:intro}), which we corroborate via synthetic experiments, as well as data-driven studies using the NYC taxi-cab data~\cite{donovan2014new}.

To summarize, our work is to the best of our knowledge the first study on the loss in efficiency (i.e. operational costs) of MoD systems due to the fragmentation of demand across multiple platforms. While further studies are needed to understand the effects of platform competition, we argue that understanding the systemic inefficiencies uncovered in our work is an important first step for guiding public policy for the MoD ecosystem.

\begin{figure}[t!]
	\centering
	\includegraphics[width=0.75\columnwidth]{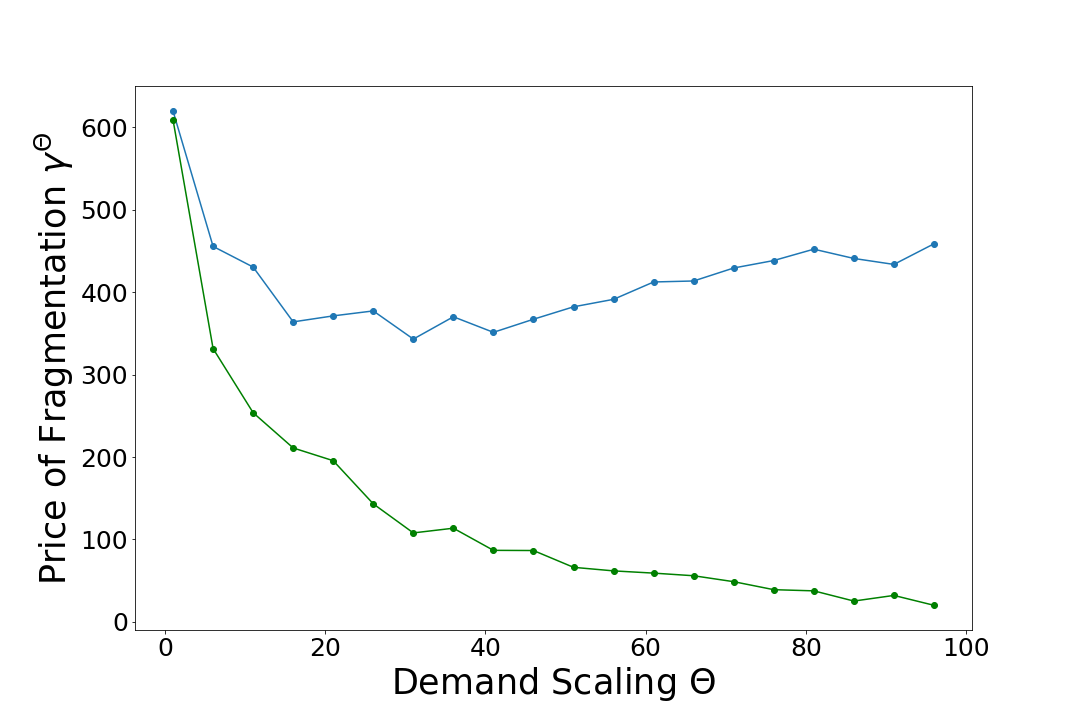}
	\caption{Simulation demonstrating the scaling behavior of the Price of Fragmentation (the difference between the rebalancing cost of the duopoly and the monopoly) in the fragmentation-resilient (green) and fragmentation-affected (blue) regimes. The demand distributions and costs are based the NYC taxi data; in particular, the two regimes shown here correspond to two successive hours (noon to 1 p.m and 1 to 2p.m.) on May 10. These time slots were chosen as their initial PoF (i.e., before scaling) are approximately equal, while our analysis shows that one is fragmentation-affected while the other is fragmentation-resilient.
		\vspace{-0.5cm}}
	\label{fig:intro}
\end{figure}

\subsection{Summary of our Results}

We consider an MoD system with either a monopolist, or two competing firms -- in the latter case, we assume the demand gets exogenously split between the two firms. Given this demand split, we focus on the \emph{cost of rebalancing}, i.e. the total cost of aligning vehicle supply with passenger demand. In this context, demand fragmentation leads to a total rebalancing cost in a duopoly that is necessarily greater than or equal to that of the monopolist cost. To quantify this, we consider a stochastic model of demand splitting, where we assume the demand for each origin-destination pair is split between the firms according to some random process; we define the {\em Price of Fragmentation} (PoF) to be the expected excess rebalancing cost incurred by the duopoly. To incorporate the growth of MoD systems, we study the PoF under a \emph{large-market scaling}: we scale all demands simultaneously by a factor $\theta$, before randomly splitting the demand between the firms (cf. Definition~\ref{def:POF} for details).

For ease of presentation, we initially focus on the case of homogeneous market shares, wherein the average fraction of demand going to each firm is the same across the network; we extend our results to heterogeneous market shares in Section~\ref{sec:general}.
It is important to note that our setup assumes an independent and sufficient vehicle supply, and exogenous demand splitting. In other words, we assume that consumers have idiosyncratic demand preferences and we do not model competition between the platforms. While we acknowledge that this is an important consideration for real world systems, the insights we gain under our simpler model are by themselves significant and non-obvious, and will likely continue to hold under more complex models as well. 

Our results are summarized as follows: for general networks under the large-market scaling, we show that the PoF undergoes a phase transition, depending on the structure of the underlying demand vector. In the first regime the PoF decays exponentially under scaling -- we refer to this as the \emph{fragmentation-resilient} regime. In contrast, in the second regime, the PoF grows under scaling, at a rate equal to the square root of the scaling parameter -- we refer to this as the \emph{fragmentation-affected} regime. Figure.~\ref{fig:intro} shows an example of these two cases, based on demand distributions estimated using the NYC taxi data over two different one-hour time slots; note that though both the settings appear to have identical PoF at the original demand, the two curves are markedly different as we scale the average demands. 

More specifically: the rebalancing cost in an MoD system is characterized by a minimum-cost circulation problem (cf. Equation~\eqref{eq:primal}), and our results aim to understand how the solution of this program changes under random demand splitting. Our main results (Theorems~\ref{thm:expdecay}, \ref{thm:sqroot} and \ref{thm:nonuniform}) together show that under homogeneous demand splitting, the \emph{system is in the fragmentation-affected regime if and only if the underlying demand vector is dual-degenerate with respect to the min-cost circulation program}, even when the market is inhomogeneously shared in the city. We also extend the results for non-homogeneous splits (Theorem~\ref{thm:nonuniform}), where the behavior may be more complex.

The above results, though providing an explicit characterization of the phase transition, are less interpretable as they are in terms of structural properties of the optima of an LP. We next provide an elegant combinatorial characterization for the fragmentation-affected regime (Theorem~\ref{thm:degen}). Our characterization suggests that the troubling case is when the demand induces \emph{local balanced clusters}, i.e., the network can be partitioned in two or more clusters such that there is (roughly) balanced underlying demand going to and fro between clusters, while the cost of traveling between clusters is high. We conjecture that the structure of urban traffic patterns can indeed lead to such balanced partitions~\footnote{A potential example is a distant airport, or twin city-centers, which typically have balanced incoming and outgoing flow of passengers over any sufficiently long interval.}. In Section~\ref{sec:sims}, we provide empirical evidence from the NYC taxi data, where we show that local balanced clusters do often arise in practice. We also perform simulations on synthetic examples, as well as on the NYC taxi data, to verify our PoF characterizations. Finally, in Section~\ref{sec:insights}, we discuss implications of our results on how MoD ecosystems should be regulated, and discuss possible extensions of our work.

\subsection{Related Work}

As alluded to before, MoD systems have recently gained wide interest in the research community. A large body of this work studies the design of optimal vehicle rebalancing policies, which we discuss below. Other topics of study which are less relevant to our work include dynamic pricing~\cite{banerjee2015pricing} and ride-pooling~\cite{SR14,SFR17}.

For rebalancing policies, a common approach is to formulate this as a control problem on a closed queuing network. 
Assuming that the number of cars and demand are very high, one can then use a fluid approximation to find rebalancing policies, and test them numerically afterwards~\cite{PSFR12,SGF15}. A related line of work is based on heuristics that enforce a certain fairness property. In transportation settings, George et al. used these to optimize weighted throughput~\cite{george2012stochastic}, Zhang et al. to minimize rebalancing costs~\cite{ZP14}. 

Most of the above papers are based on heuristics with limited guarantees.
More recently, several works \cite{ozkan2016dynamic,braverman2016empty} formally characterized the optimal control policy under fluid (or large-market) limits of closed queueing models of ride-sharing systems. Moreover, work by ~\citet{BFL17} provided a unified framework for designing rebalancing, dispatching and pricing policies, which provides approximation guarantees in finite settings, as well as obtaining the optimality of the above fluid and large-market limit policies via elementary arguments.

Finally we note that there is an extensive literature on competition between platforms~\cite{rochet2006two, rysman2009economics,weyl2010price}, including in network settings~\cite{ABH14,BEI14,AS15} and power markets~\cite{CBW17}. These are of less relevance to our work as we focus on the operational costs of demand fragmentation. However, it may be of interest to develop models synthesizing our work with such strategic competition settings -- we discuss this further in Section~\ref{sec:insights}.

\section{Setting and Preliminaries}
\label{sec:setting}

\subsection{Basic setting}

\noindent\textbf{Network Model}: We consider a city represented as a complete directed graph $G(V,E)$, comprising of $n$ nodes (or \emph{stations}) $V$ corresponding to neighborhoods, linked together by edges $E$ corresponding to {\em fastest routes} between any pair of stations. 
Each edge has an associated {\em average travel-time} $\tau_{ij}$, which can be interpreted as the cost incurred by a trip from $i$ to $j$ (in terms of driving time, or after appropriate re-weighting, miles driven/fuel consumption/pollution). Travel times need not be symmetric, but by definition must obey the triangle inequality.

\noindent\textbf{Demand Model}: Within this city, we study the operations of an MoD ecosystem (i.e., a one or more MoD providers) over a period of time during which the underlying demand for rides remains stationary~\footnote{Essentially we want a period over which the average demand is roughly constant -- for example, the morning rush hour on weekdays, or the afternoon hours on weekends, etc. Our results extend to time-varying arrivals, at the cost of additional notation.}. Since our focus in this work is on operational costs, we assume that this demand is exogenous (i.e., after pricing), and moreover, that all demand must be serviced with equal priority.

Over the period of time under consideration, we use $\{\Lambda_{ij}\}_{(i,j)\in E}$ to denote the \emph{demand vector}, where for each edge $(i, j)\in E$, $\Lambda_{ij}$ denotes the (random) number of customers requesting rides starting at node $i$ and terminating at node $j$. For any node $i$, we define $\Lambda_i = \sum_{j\in V} (\Lambda_{ji} - \Lambda_{ji})$ to denote the {\em net inflow} of passengers into node $i$. 
When clear from context, we sometimes abuse notation to use $\Lambda$ to denote the vector $\{\Lambda_i\}$. 


\noindent\textbf{The MoD Ecosystem}: To model fragmentation, we consider two models of the MoD ecosystem: $(i)$ a \emph{monopolist} setting, where all demand is serviced by a single MoD platform, and $(ii)$ a \emph{duopolist} setting, where we assume the demand is split between two firms. In particular, in the duopoly, for any edge $(i,j)\in E$, we assume the two firms $a$ and $b$ split the demand as $(\Lambda_{ij}^a,\Lambda_{ij}^b)$ such that $\Lambda_{ij}^a+\Lambda_{ij}^b = \Lambda_{ij}$ 
This split is assumed to be exogenously determined (i.e., unaffected by firms' strategies) -- in particular, we assume $\{\Lambda_{ij}^a,\Lambda_{ij}^b\}$ are randomly generated according to some specified distribution.

\noindent\textbf{Supply and Rebalancing}: Next we describe the supply side of the model for the monopolistic setting; in the duopolist setting, we assume the same model applies to each firm independently. This corresponds to a setting where each firm has an independent fleet within the time-period of interest; this is the case in some current markets (for example, where drivers are employed by the firms), and moreover, likely to be the case with self-driving fleets. Note this model does not capture {\em multi-homing drivers}, who work simultaneously for multiple firms; in Section~\ref{sec:insights}, we discuss how our results suggest that a sufficient amount of multi-homing may potentially help reduce overall social costs.

Our chief object of study is the {\em cost of rebalancing} $RC(\Lambda)$, i.e., the operational cost of rerouting empty supply to meet the demand. This given by the following {\em min-cost circulation LP}: 
\begin{align}
& \min \quad
\sum_{(i,j)} \tau_{ij} x_{ij} & \label{eq:primal}\\
& \mathrm{s.t.} \;\;\quad \sum_{j} (x_{ji} - x_{ij}) = \Lambda_i \quad\forall\, i\in V\quad;\quad 
\quad x_{ij}\geq 0 \quad \forall\, (i,j)\in E, \nonumber
\end{align}
where $\Lambda_i = \sum_{j} (\Lambda_{ij} - \Lambda_{ji})$ is the net inflow of demand at station $i$, and $x_{ij}$ is the number of rebalancing trips from station $i$ to $j$.
The rebalancing cost $RC(\Lambda)$ can also be expressed via its dual LP:
\begin{align}
\label{eq:dual}
& \max \qquad
\sum_{i} \alpha_{i}\Lambda_i \\
& \mathrm{s.t.} \qquad\quad \alpha_{i} - \alpha_{j} \leq \tau_{ij} & \forall (i,j)\in E\nonumber
\end{align}
The formulation of the rebalancing cost in terms of the min-cost circulation (or {\em optimal transport}) problem arises in many studies of MoD systems, and has different justifications based on different underlying assumptions. The critical point to observe is that $RC(\Lambda)$ is in a sense the best ex-post cost that a platform can obtain while satisfying all demand, and that dynamic rebalancing policies used in practice are known to match this bound very closely in various settings; we discuss these in more detail in Appendix~\ref{appsec:rebalancing}.
Consequently, we henceforth focus on the rebalancing cost $RC(\Lambda)$ as per the LP in \eqref{eq:primal} (alternately, the Dual LP~\eqref{eq:dual}), and ignore additional losses due to sub-optimality in dynamic scheduling.

Before proceeding, we present a structural characterization of the function $RC(\cdot)$, which we use extensively in what follows.
\begin{proposition}
	\label{prop:convex}
	The rebalancing cost $RC(\cdot)$ is convex and homogeneous (i.e., $\forall \; \theta \geq 0, \; RC(\theta\Lambda) = \theta RC(\Lambda)$).
\end{proposition}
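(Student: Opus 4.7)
The plan is to read off both properties directly from the LP definition of $RC(\cdot)$, using either the primal \eqref{eq:primal} or the dual \eqref{eq:dual}. Both routes are essentially one-line observations; the dual route is slightly cleaner because it expresses $RC(\Lambda)$ as a supremum of linear functions of $\Lambda$ over a feasible region that does not depend on $\Lambda$.

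For homogeneity, I would work in the primal. Fix $\theta \geq 0$. The map $x \mapsto \theta x$ is a bijection between the primal feasible sets for $\Lambda$ and for $\theta\Lambda$: the flow-balance equality $\sum_j (x_{ji}-x_{ij}) = \Lambda_i$ is linear in $x$, and multiplication by $\theta \geq 0$ preserves $x_{ij}\geq 0$. Under this bijection the cost objective $\sum_{ij}\tau_{ij}x_{ij}$ scales by $\theta$, so $RC(\theta\Lambda) = \theta RC(\Lambda)$. The same conclusion can be read off the dual: scaling $\Lambda$ by $\theta \geq 0$ scales the dual objective by $\theta$ without touching the feasible region, so the optimum scales by $\theta$ as well.

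For convexity, take $\Lambda^1,\Lambda^2$ and $\lambda\in[0,1]$, and let $x^1,x^2$ be optimal primal solutions for these right-hand sides. The convex combination $x := \lambda x^1 + (1-\lambda)x^2$ is entrywise nonnegative and, by linearity of the flow-balance constraints, feasible for $\lambda\Lambda^1 + (1-\lambda)\Lambda^2$. Its objective value is exactly $\lambda RC(\Lambda^1) + (1-\lambda)RC(\Lambda^2)$, so $RC(\lambda\Lambda^1 + (1-\lambda)\Lambda^2) \leq \lambda RC(\Lambda^1) + (1-\lambda)RC(\Lambda^2)$, which is convexity. The dual view gives the same conclusion in a single stroke: $RC(\Lambda) = \sup_\alpha \sum_i \alpha_i \Lambda_i$ over a fixed polyhedron $\{\alpha : \alpha_i - \alpha_j \leq \tau_{ij}\}$, and a supremum of linear functions of $\Lambda$ is always convex.

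There is essentially no obstacle here; the only item worth checking is that both LPs have a finite common optimum so that the manipulations above are meaningful. Primal feasibility holds because $\sum_i \Lambda_i = 0$ by construction and $G$ is complete, so any imbalance can be routed (e.g., through a fixed hub); dual feasibility is immediate from $\alpha \equiv 0$ together with $\tau_{ij}\geq 0$. Strong duality then yields a finite $RC(\Lambda)$ for every $\Lambda$, on which the two arguments above operate.
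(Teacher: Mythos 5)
Your proposal is correct and, in its dual-LP form, is essentially the paper's own argument: homogeneity by factoring $\theta$ out of the dual objective over a fixed feasible region, and convexity because $RC(\Lambda)$ is a pointwise maximum of linear functions $\alpha^\intercal\Lambda$. The additional primal arguments (scaling/convex-combining feasible flows) and the finiteness check via $\sum_i\Lambda_i=0$ and $\alpha\equiv 0$ are valid and slightly more careful than the paper, but do not change the substance.
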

\begin{proof}
	Both properties are most easily seen from the dual LP in \eqref{eq:dual}. Homogeneity follows immediately from factoring out $\theta$ from the objective; moreover, convexity follows from noting that $RC(\Lambda)$ can be written as the maximum of a finite number of linear functions $\alpha^\intercal \Lambda$, with $\alpha$ being the corner points of the dual polytope. 
\end{proof}

Finally, we note that though we focus on rebalancing cost in terms of time, an alternate commonly-used metric is the \emph{earth-movers distance} (EMD), which measures cost in terms of additional distance traveled for rebalancing. The two definitions are however equivalent up to scaling constants. In our work, we use
cost/travel-time/distance interchangeably to represent these costs.

\subsection{The Price of Fragmentation}
\label{ssec:pofdef}

Given the above setting, we can now define our measure for capturing the \emph{price of fragmentation} (PoF): the increase in rebalancing costs in a duopoly as compared to a monopoly, for serving the same demand with exogenous demand splitting. Such a notion captures the price paid by society in terms of additional vehicle miles in order to satisfy the same demand.

We first need some additional notation. Given a demand vector $\{\Lambda_{ij}\}$ in the monopolist setting, our aim is to understand the increase in rebalancing costs due to exogenous splitting of the demand into vectors $\{\Lambda_{ij}^a\}$ and $\{\Lambda_{ij}^b\}$ for firms $a$ and $b$ respectively. Moreover, to capture the growth of MoD platforms, we analyze the system under the so-called {\em large-market scaling}, wherein we scale all the incoming demands by a factor $\theta\in\mathbb{N}$.

Given the above setting, one way to model the price of fragmentation is to consider the worst case increase in rebalancing costs over all feasible demand splits. We discuss this notion in detail in Appendix~\ref{appsec:advpof}; however, under large-market scaling, it is easy to see that such a worst-case PoF must scale linearly in $\theta$ (this follows directly from the homogeneity of $RC(\Lambda)$). This suggests that for a meaningful analysis, we need a more refined notion.

We instead consider a \emph{stochastic model of demand splitting} between the firms. At a high level, such a model captures the idea that when users idiosyncratically choose between the two firms, then on a macro-level, the demand split between the firms in each location can be modeled as a random variable centered at the market share of the respective firms.
For ease of presentation, we assume that each $\Lambda_{ij}$ is an integer for every pair $(i,j)$, and the demand-splits between firms correspond to Binomial random variables with equal mean (i.e., homogeneous market share) across the network. {\em Our results however hold for general demand and demand-split distributions as well as heterogeneous market share} (cf. Section~\ref{sec:general}).

Given firms $a,b$, we define $\rho \in [0,1/2] $ to be the market share of firm $a$, and $1-\rho$ to be the market share of $b$. Consequently, for any edge $(i,j)\in E$, we assume that the demand $\Lambda_{ij}$ is split in two parts, with firm $a$ receiving $\Lambda_{ij}^a \sim \mbox{Binomial}\left(\Lambda_{ij}, \rho\right)$, and firm $b$ receiving $\Lambda_{ij}^b = \Lambda_{ij} - \Lambda_{ij}^a$.
We abuse notation to define $\Lambda^a \sim \mbox{Binomial} \left({\Lambda, \rho}\right)$ to be a vector such that each coordinate is an iid. rv. $\Lambda^a_{ij} \sim \mbox{Binomial} \left({\Lambda_{ij}, \rho}\right)$.
Now the increase in rebalancing cost is given by: 
\begin{align*}
\gamma \triangleq \mathbb{E}[RC(\Lambda^a) + RC(\Lambda^b)]-RC(\Lambda)
\end{align*}
Where $(\Lambda^a,\Lambda^b)$ denotes the random vector of demands for the two firms.
Now, under the \emph{large-market scaling} with demand splits $\Lambda^{a,\theta} \sim \mbox{Binomial} \left({\theta\Lambda, \rho}\right)$, and $\Lambda^{b,\theta} = \theta\Lambda - \Lambda^{a,\theta}$ (for $\theta\in\mathbb{N}$), we formally define the \emph{Price of Fragmentation} as follows:
\begin{definition}{(\textbf{Price of Fragmentation})}
\label{def:POF}
	Given monopolist demand $\Lambda$ and market-share $\rho$, the Price of Fragmentation is given by:
	$$
	\gamma^{\theta} \triangleq \mathbb{E}[RC\left(\Lambda^{a,\theta}\right) + RC\left(\Lambda^{b,\theta}\right)]-RC(\theta\Lambda)$$
\end{definition}
To get more insight into the PoF, we note first that given the homogeneity and convexity of $RC(\cdot)$, Jensen's inequality implies that $\gamma \geq 0$, i.e., competition necessarily reduces the system efficiency under stochastic splitting.
More precisely, noting that $\mathbb{E}[\Lambda^{a,\theta}] = \rho \theta \Lambda$, we get that:
\begin{align*}
\mathbb{E}[RC(\Lambda^{a,\theta}) + RC(\Lambda^{b,\theta})]
&\geq RC(\rho\theta\Lambda) + RC((1-\rho)\theta\Lambda) 
= RC(\theta.\Lambda)
\end{align*}
Note that since $RC(\cdot)$ is continuous and $\Lambda^{a,\theta}$ is binomial (and hence, has sub-Gaussian tails), the law of large numbers guarantees that $\gamma^{\theta}/\theta \rightarrow 0$.
Our results essentially characterize the rate of this convergence, and in the process, show that depending on the underlying demand vector, the stochastic PoF$(\theta)=\gamma^\theta$ either decays to $0$ or grows unbounded. We refer to the former as the \emph{fragmentation-resilient} regime, and the latter as the \emph{fragmentation-affected} regime.

To conclude this section, we make some observations:\\
-- From a technical perspective, we note that though we restrict to the case of a duopoly, our results generalize to any number of competitors. However, the monopoly to duopoly shift is in a sense the most drastic in terms of costs.\\
-- Similarly, our results also generalize to heterogeneous market shares, i.e., where the demand at different locations may split differently. Here though, in addition to our two regimes, we obtain a third wherein $\gamma^\theta$ scales linearly with $\theta$. We discuss this in more detail later in the paper.\\
-- Finally note that, in a sense, our model of exogenous demand splitting isolates the effects of firm-level competitive decisions (re. pricing, capacity etc.) in a manner similar to price-theoretic vs. microeconomic models for marketplaces~\cite{weyl2010price}. An advantage of this is that our results only depend on the structure of total demand, rather than particular assumptions of competitive dynamics. Consequently any global demand model accounting for competitive effects can be plugged in to extend the range of applications of our results.

\section{Warm up: Price of Fragmentation in a Two-Station Network}
\label{sec:2node}

To build some intuition into our main result, we first analyze a simple network with only two nodes. One advantage of this setting is that it allows us to solve for the rebalancing cost in closed form, thereby providing an analytic insight into the system behavior in different regimes. These insights prove useful in establishing the result for general networks in the next section.

In the case where our network only has two stations, we have the following system parameters:
\begin{itemize}
	\item Two stations $\{1,2\}$ with inter-station travel times $\tau_{12}=1$ and $\tau_{21} = \tau \geq 1$
	\item Monopolist demand vectors $\Lambda_{12} = \lambda$ and $\Lambda_{21} = \mu$,
	\item Optimal dual variables for each station under the monopolist demand vectors given by $\alpha_1 = 0$ and $\alpha_2$,
	\item Random demand-splits in the duopoly setting given by\\ $\Lambda_{12}^a \sim \mbox{Binomial} \left(\theta\lambda, \rho\right), 
	\Lambda_{12}^b = \Lambda_{12}-\Lambda_{12}^a$, \\
	$\Lambda_{21}^a\sim \mbox{Binomial} \left({\theta\mu, \rho}\right),\Lambda_{21}^b=\Lambda_{21} - \Lambda_{21}^b$
\end{itemize}
The setting is summarized in the following diagram:
\medskip
\begin{center}
	\begin{tikzpicture}
	\draw [->] (-2,0) -- (-0.5,0);
	\node at (-1.5,0.2) {$\Lambda_{12}=\lambda$};
	\draw [<-] (3.5,0) -- (5,0);
	\node at (4.5,0.2) {$\Lambda_{21}=\mu$};
	\draw (0,0) circle [radius=0.5];
	\node at (0,0) {1};
	\draw (3,0) circle [radius=0.5];
	\node at (3,0) {2};
	\draw [->] (0.433,0.25) -- (2.567,0.25);
	\node at (1.5,0.45) {$\tau_{12}=1$};
	\draw [<-] (0.433,-0.25) -- (2.567,-0.25);
	\node at (1.5,-0.5) {$\tau_{21}=\tau$};
	\end{tikzpicture}
\end{center}
\medskip
Note that our choice of $\tau_{12},\tau_{21}$ is without loss of generality (up to multiplicative scaling). Moreover, the dual constraint implies $-1 \leq \alpha_2 \leq \tau$. Consequently, the dual polytope has two corner points $(0, \tau)$ and $(0,\text{-}1)$, and: 
\begin{align*}
RC(\lambda, \mu) &= \max\{ \tau(\mu - \lambda), (\lambda - \mu)\}\\
&= \frac{(\tau -1)(\mu - \lambda) + (\tau+1)|\mu - \lambda| }{2}
\end{align*}
Given this explicit formula, we can now derive the asymptotic behavior of $\gamma^{\theta}$. To do so,we approximate the binomial distribution by a Gaussian distribution~\footnote{We make this assumption primarily to make the calculations more transparent, thereby giving some intuition for the general case. Note that this may introduce a bias as the demand may be negative; however, this is a second order effect, and is easily handled by our later theorems, which only require tail bounds.}; in particular, we assume the demand splits obey $\Lambda_{12}^a \triangleq X\sim \mathcal{N} \left({\rho\theta\lambda, \sigma_X^2}\right)$ and $\Lambda_{21}^a \triangleq Y \sim \mathcal{N} \left({\rho\theta\mu, \sigma_Y^2}\right)$ with $\sigma_X^2 = \theta\lambda\rho(1-\rho)$ and $\sigma_Y^2 = \theta\mu\rho(1-\rho)$. Now we have the following result
\begin{proposition}
	\label{prop:2node}
	For sufficiently large $\theta$, $\exists\,\, A_1,A_2\in \mathbb{R}$ such that:
	\begin{itemize}
		\item  $\lambda = \mu \; \Rightarrow \; \gamma^{\theta} = A_1\theta^{1/2}$
		\item  $\lambda \neq \mu \; \Rightarrow \gamma^{\theta} = A_2\theta^{-1/2}e^{-\frac{\rho(\lambda-\mu)^{2}}{2(1-\rho)(\lambda+\mu)}.\theta} + o\left(\theta^{-1/2}e^{-\frac{\rho(\lambda-\mu)^{2}}{2(1-\rho)(\lambda+\mu)}.\theta}\right)$
	\end{itemize}
\end{proposition}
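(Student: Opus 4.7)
The plan is to reduce the proposition to a one-dimensional computation involving the net imbalance at a station, exploit the closed-form formula for $RC(\cdot)$ derived in the excerpt, and then do a Gaussian tail analysis whose main technical ingredient is a Mills-ratio expansion.

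First I would rewrite the cost difference in a convenient form. Let $X\triangleq \Lambda_{12}^{a}$, $Y\triangleq \Lambda_{21}^{a}$ and define the signed imbalances $D_a \triangleq Y-X$ and $D_b \triangleq (\theta\mu-Y)-(\theta\lambda-X) = \theta(\mu-\lambda)-D_a$. Under the Gaussian approximation, $D_a\sim \mathcal{N}(\rho\theta(\mu-\lambda),s^2)$ and $D_b\sim \mathcal{N}((1-\rho)\theta(\mu-\lambda),s^2)$, with $s^2=\theta\rho(1-\rho)(\lambda+\mu)$. Plugging $(X,Y)$ and $(\theta\lambda-X,\theta\mu-Y)$ into the closed form $RC(\lambda',\mu')=\tfrac{1}{2}[(\tau-1)(\mu'-\lambda')+(\tau+1)|\mu'-\lambda'|]$ and using $D_a+D_b=\theta(\mu-\lambda)$ (which is deterministic), the linear $(\tau-1)$ terms collapse against the monopoly cost, leaving
\[
\gamma^{\theta} \;=\; \frac{\tau+1}{2}\Big(\,\mathbb{E}[|D_a|]+\mathbb{E}[|D_b|]-\theta|\mu-\lambda|\,\Big).
\]
This isolates the analysis to the mean absolute deviation of two Gaussians.

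Next I would invoke the standard identity for $Z\sim\mathcal{N}(m,s^2)$:
\[
\mathbb{E}[|Z|] \;=\; |m|\;+\;2s\,\phi(|m|/s)\;-\;2|m|\,\Phi(-|m|/s),
\]
where $\phi,\Phi$ are the standard normal density and CDF. Applying this to $D_a$ and $D_b$, the $|m_a|+|m_b|$ terms sum to $\theta|\mu-\lambda|$ when $\lambda\neq \mu$ (as the means then share a sign), and the proposition reduces to estimating $2s\phi(m/s)-2m\Phi(-m/s)$ for each of $m\in\{m_a,m_b\}$. When $\lambda=\mu$, the means vanish, the $\Phi$ term drops out, and I immediately get $\gamma^{\theta}=(\tau+1)s\sqrt{2/\pi}$; since $s=\Theta(\sqrt{\theta})$ this yields the first bullet with $A_1=(\tau+1)\sqrt{2\rho(1-\rho)(\lambda+\mu)/\pi}$.

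The second bullet is the delicate part and will be the main technical obstacle. When $\lambda\neq\mu$, $m_a/s$ and $m_b/s$ both grow like $\sqrt{\theta}$, so naively both $\phi(m/s)$ and $\Phi(-m/s)$ are exponentially small of the same order and I need the cancellation between them. I would use the Mills-ratio asymptotic $\Phi(-x)=\tfrac{\phi(x)}{x}\bigl(1-x^{-2}+O(x^{-4})\bigr)$ for $x=m/s\to\infty$, which gives
\[
2s\,\phi(m/s)-2m\,\Phi(-m/s)\;=\;\frac{2s^{3}}{m^{2}}\phi(m/s)\;+\;O\!\left(\tfrac{s^{5}}{m^{4}}\phi(m/s)\right).
\]
Substituting $m_a=\rho\theta(\mu-\lambda)$ and $s^{2}=\theta\rho(1-\rho)(\lambda+\mu)$, the exponent becomes $-(m_a/s)^2/2=-\tfrac{\rho(\mu-\lambda)^2}{2(1-\rho)(\lambda+\mu)}\theta$, matching the proposition, and the prefactor collapses to order $\theta^{-1/2}$ after simplification. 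The analogous term from $D_b$ carries the exponent $-\tfrac{(1-\rho)(\mu-\lambda)^2}{2\rho(\lambda+\mu)}\theta$, which is strictly more negative since $\rho\le 1/2$, so it is absorbed into the $o(\cdot)$ error. Identifying the resulting constant as $A_2$ (an explicit function of $\lambda,\mu,\rho,\tau$) completes the proof. The only real subtlety is tracking the cancellation in the Mills expansion carefully enough to keep the $\theta^{-1/2}$ prefactor correct and to absorb the subdominant exponential from $D_b$ into the error term.
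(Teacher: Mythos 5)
Your proposal is correct and follows essentially the same route as the paper's proof: both reduce $\gamma^{\theta}$ to the excess of $\mathbb{E}[|D_a|]+\mathbb{E}[|D_b|]$ over $\theta|\mu-\lambda|$ via the closed-form for $RC$, then apply the folded-normal mean formula and an asymptotic tail expansion (your Mills-ratio expansion is equivalent to the paper's $\mathrm{erf}$ expansion), obtaining the identical $\sqrt{2/\pi}\,s^{3}m^{-2}e^{-m^{2}/(2s^{2})}$ leading term and absorbing the subdominant exponential from $D_b$. The only negligible imprecision is that at $\rho=1/2$ the two exponents coincide rather than one being strictly smaller, so both then contribute to $A_2$ — a point the paper glosses over in exactly the same way.
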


The formal proof is presented in Appendix~\ref{appsec:proofs}; however, we can use this setting to point out some features which will help in understanding the general system in the next section: 
\begin{itemize}
\item Consider a particular realization of the demand splits -- now, if all three firms end up having to rebalance in the same direction (say from $1$ to $2$), then we get $0$ PoF for that sample. Note that this corresponds to all three LPs (for rebalancing under the monopolist demand, and individually for each  firm) attaining their maximum value at the same corner point of the dual polytope. In contrast, in the unbalanced case, the loss in efficiency comes from an inversion of the direction of rebalancing flow.
\item When $\lambda > \mu$, the difference of demand flows increases linearly with $\theta$, while the probability that $\lambda^a < \mu^a $ decays exponentially. This is the reason why we get an exponential decay in $\gamma^{\theta}$, and we see that the higher the discrepancy between input demands, the faster the convergence.
\item On the contrary, when $\lambda = \mu$, the decay is much slower because the total demand is perfectly balanced. Consequently, $\gamma^{\theta}$ accounts for the fluctuations of $X-Y$, which has variance on the order of $\sqrt{\theta}$. 
\end{itemize}

To understand the loss in efficiency more formally in a general setting, let's consider the dual LP (cf. \eqref{eq:dual}), and a realization of the demand-split $\Lambda^a=X$ and $\Lambda^b=\lambda-X$. Suppose the optimal dual corner point under this realization for the monopoly and each firm is given by $\alpha^a$, $\alpha^b$ and $\alpha^*$. Then we have:
\begin{align*}
RC(\Lambda^a)+RC(\Lambda^b)-RC(\Lambda)
=\Lambda^a . \alpha^a+ \Lambda^b . \alpha^b -\Lambda . \alpha^*
\end{align*}
If $\alpha^a=\alpha^b=\alpha^*$ then the loss is zero; inefficiency arises when the dual solutions are different. This suggests that the overall loss in rebalancing costs is proportional to the probability of having different dual solutions. We formalize this in the next section.

\section{Price of Fragmentation: The General Network Setting}
\label{sec:general}

The network with two nodes in the previous sections gives us important intuition to understand the behavior of the stochastic PoF in any general network with given demands and edge costs. In particular, Proposition~\ref{prop:2node} showed that the phase transition in the PoF $\gamma^{\theta}$ from exponential decay to square-root growth was induced by a change in the optimal corner point of the dual polytope. We will now extend this property to general networks.

We start by defining some mathematical objects we use in our further results. First, we define $\mathcal{E}$ to be the set of corner points of the dual polytope (cf. equation~\eqref{eq:dual}); note that the rebalancing cost $RC(\Lambda)$ is the maximum inner product of $\Lambda$ over all corner points in $\mathcal{E}$. Next, for each $\alpha \in \mathcal{E}$, we define the following set:
$$C_{\alpha} = \{\; \lambda \ s.t. \ RC(\lambda) = \alpha^\intercal\lambda \; \} $$
\begin{proposition}
	$\forall \; \alpha \in \mathcal{E}$, $C_{\alpha}$ is a closed cone. Furthermore, the intersection of two cones $(C_{\alpha},C_{\alpha'})$ is a hyperplane.
\end{proposition}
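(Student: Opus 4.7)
The plan is to exploit the identity $RC(\lambda) = \max_{\alpha' \in \mathcal{E}} (\alpha')^{\intercal}\lambda$, which holds because the dual LP in \eqref{eq:dual} is bounded in the quotient by the translation direction $(1,\ldots,1)$ (that direction is harmless since $\sum_i \Lambda_i = 0$), so its optimum is attained at a vertex of the dual polytope. Given this, I would rewrite
\[
C_\alpha \;=\; \bigl\{\lambda : \alpha^{\intercal}\lambda \geq (\alpha')^{\intercal}\lambda \text{ for all } \alpha' \in \mathcal{E}\bigr\} \;=\; \bigcap_{\alpha' \in \mathcal{E}} \bigl\{\lambda : (\alpha-\alpha')^{\intercal}\lambda \geq 0\bigr\}.
\]
From this representation, closedness is immediate (finite intersection of closed half-spaces), and the cone property falls out either by noting that each half-space in the intersection passes through the origin, or directly from the homogeneity of $RC(\cdot)$ established in Proposition~\ref{prop:convex}: if $\lambda \in C_\alpha$ and $t \geq 0$, then $RC(t\lambda) = tRC(\lambda) = t\alpha^{\intercal}\lambda = \alpha^{\intercal}(t\lambda)$, so $t\lambda \in C_\alpha$.

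For the second claim I would take any $\lambda \in C_\alpha \cap C_{\alpha'}$ and observe that $\alpha^{\intercal}\lambda = RC(\lambda) = (\alpha')^{\intercal}\lambda$, which rearranges to $(\alpha-\alpha')^{\intercal}\lambda = 0$. Since $\alpha \neq \alpha'$, this pins $\lambda$ down to the codimension-one hyperplane $H_{\alpha,\alpha'} = \{\lambda : (\alpha-\alpha')^{\intercal}\lambda = 0\}$ through the origin, giving the inclusion $C_\alpha \cap C_{\alpha'} \subseteq H_{\alpha,\alpha'}$.

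The main subtlety, and really the only obstacle, is the intended reading of ``is a hyperplane''. Strictly speaking $C_\alpha \cap C_{\alpha'}$ is a common face of two polyhedral cones, and fills out the whole affine hyperplane $H_{\alpha,\alpha'}$ only when $\alpha$ and $\alpha'$ correspond to adjacent vertices of the dual polytope; for non-adjacent pairs it can collapse to a strictly lower-dimensional face. The downstream phase-transition analysis, however, only uses that the intersection is contained in a hyperplane and therefore has Lebesgue measure zero in $\mathbb{R}^{|V|}$, guaranteeing that a generic demand vector lies in the relative interior of a unique cone $C_\alpha$. Accordingly, I would prove the containment version cleanly from the displayed identity above and append a short remark clarifying the ``adjacent vertex'' refinement for completeness.
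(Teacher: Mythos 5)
Your proof is correct and follows essentially the same route as the paper's: homogeneity of $RC(\cdot)$ gives the cone property, closedness is elementary (the paper takes the preimage $h^{-1}(\{0\})$ of the continuous map $h(\lambda)=RC(\lambda)-\alpha^\intercal\lambda$ where you use a finite intersection of closed half-spaces), and the hyperplane claim comes from the identity $(\alpha-\alpha')^\intercal\lambda=0$ exactly as in the paper. Your observation that $C_\alpha\cap C_{\alpha'}$ is in general only \emph{contained} in the hyperplane $\{\lambda:(\alpha-\alpha')^\intercal\lambda=0\}$ (with equality only for adjacent dual vertices) is a fair point of precision that the paper's own proof also elides, and containment is indeed all that the downstream phase-transition arguments require.
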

\begin{proof}
	It is immediate that $C_{\alpha}$ is a cone due to the homogeneity of $RC(\cdot)$. Secondly, if we define $h(\lambda)=RC(\lambda)-\alpha^\intercal\lambda$, then it is clear that $h$ is continuous and we have $C_{\alpha}=h^{-1}(\{0\})$ -- this verifies that it is closed. Finally, for distinct $(\alpha,\beta) \in \mathcal{E}$ consider $\lambda \in C_{\alpha}\cap C_{\beta}$ -- we have $RC(\lambda)=\alpha^\intercal\lambda=\beta^\intercal\lambda$, thus $(\alpha - \beta)^\intercal\lambda=0$, which is the equation of a hyperplane.
\end{proof}
These properties now allow us to derive a nice geometric interpretation of what happens when we scale the demand to infinity; we depict this in Figures.~\ref{fig:2} and \ref{fig:3} respectively for the fragmentation-resilient and fragmentation-affected regimes.
The solid circle shows the average demand $\mathbb{E}[\Lambda^{a,\theta}] = \rho\theta\Lambda$, which moves along a line as we scale $\theta$; the circles depict a high-probability envelope for the (normalized) realized demand $\Lambda^{a,\theta}/\theta$. The radius of the envelope decreases as we rescale by $\theta$.

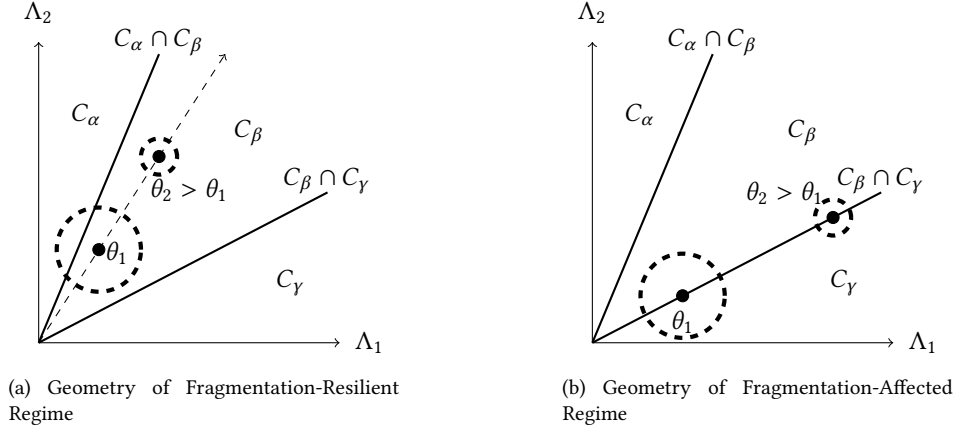
\begin{figure*}[t]
\centering
\subfigure[Geometry of Fragmentation-Resilient Regime]{
\label{fig:2}
		\begin{tikzpicture}[scale=0.8]
		\draw [->] (0,0) -- (5,0);
		\draw [->] (0,0) -- (0,5);
		\draw [thick] (0,0) -- (2,4.8);
		\draw [thick] (0,0) -- (4.8,2.5);
		\draw [->, dashed] (0,0) -- (3.1,4.8);
		\draw [dashed, ultra thick] (1,1.5484) circle [radius=0.7];
		\draw [dashed, ultra thick] (2,3.0968) circle [radius=0.3];
		\draw [fill] (1,1.5484) circle [radius=0.1];
		\draw [fill] (2,3.0968) circle [radius=0.1];
		\node at (5.5,0) {$\Lambda_{1}$};
		\node at (0,5.5) {$\Lambda_{2}$};
		\node at (0.8,3.8) {$C_\alpha$};
		\node at (3.5,3.5) {$C_\beta$};
		\node at (4.2,1) {$C_\gamma$};
		\node at (2,5) {$C_{\alpha}\cap C_{\beta}$};
		\node at (4.8,2.7) {$C_{\beta}\cap C_{\gamma}$};
		\node at (1.3,1.5) {$\theta_1$};
		\node at (2.5,2.55) {$\theta_2 > \theta_1$};
		\end{tikzpicture}
}
\hspace{2cm}
\subfigure[Geometry of Fragmentation-Affected Regime]{
\label{fig:3}
		\begin{tikzpicture}[scale=0.8]
		\draw [->] (0,0) -- (5,0);
		\draw [->] (0,0) -- (0,5);
		\draw [thick] (0,0) -- (2,4.8);
		\draw [thick] (0,0) -- (4.8,2.5);
		\draw [dashed, ultra thick] (1.5,0.78125) circle [radius=0.7];
		\draw [dashed, ultra thick] (4,2.0833) circle [radius=0.3];
		\draw [fill] (1.5,0.78125) circle [radius=0.1];
		\draw [fill] (4,2.0833) circle [radius=0.1];
		\node at (5.5,0) {$\Lambda_{1}$};
		\node at (0,5.5) {$\Lambda_{2}$};
		\node at (0.8,3.8) {$C_\alpha$};
		\node at (3.5,3.5) {$C_\beta$};
		\node at (4.2,1) {$C_\gamma$};
		\node at (2,5) {$C_{\alpha}\cap C_{\beta}$};
		\node at (4.8,2.7) {$C_{\beta}\cap C_{\gamma}$};
		\node at (1.5,0.35) {$\theta_1$};
		\node at (3.2,2.5) {$\theta_2 > \theta_1$};
		\end{tikzpicture}
}
\caption[Geometric intuition behind the two PoF regimes]{In Fig.~\ref{fig:2} we demonstrate the geometry of the \emph{fragmentation-resilient regime} (i.e., where $\gamma^\theta$ undergoes exponential decay), while in Fig.~\ref{fig:3} we do the same for the \emph{fragmentation-affected} regime (i.e., where $\gamma^\theta=\Omega(\sqrt{\theta})$). In both plots, the straight dashed line represents the expected demand for the first firm as demand scaling $\theta$ increases, while the two solid circles each represent the expected demand under two scaling values $\theta_1 < \theta_2$. Note that the expected demand line is within the dual cone in Fig.~\ref{fig:2}, while it is on the cone boundary in Fig.~\ref{fig:3}.
The dotted circles around the expected value represent a high-confidence ball for the realized demands; the radii of the balls are based on the variance of the rescaled demand r.v. $\lambda^{\theta}/\theta$, which decreases as we scale $\theta$. 
Due to random fluctuations, the probability of having an extra loss is high in Fig.~\ref{fig:3}, inducing a loss of the same order of the fluctuations.}
\label{fig:2regimes}
\end{figure*}

In the fragmentation-resilient setting, the expected demand remains on a line inside a given dual cone, while in the fragmentation-affected regime, it lies on a hyperplane defined by the intersection of two cones. 
Since $\rho$ is homogeneous, the expected demands after splitting for both firms in the duopoly also remain on the same line upon scaling. The true demands however have random fluctuations around its expectation, but since the cone becomes wider while scaling, while the demand distributions concentrate (assuming well-behaved tails), the probability of escaping the cone decays fast. 

On the other hand, if the optimal dual solution is degenerate, then the expected regime lies in the intersection of 2 or more cones. Now under scaling the expected monopolist demand vector remains on the hyperplane, but despite concentrating about its expected value, the probability of having each duopolist firm's realized demand-split fall on either side of the boundary remains significant.

We now state our main results, which formalize these intuitions. Hereafter, for any set $\mathcal{A}$, $\mathring{\mathcal{A}}$ denotes its interior. First, we characterize the exponential decay regime:
\begin{theorem}
	\label{thm:expdecay}
	(\textbf{Fragmentation-resilient regime}) Given network $G$, monopolist demand vector $\Lambda$ and demand-split $\rho$, suppose that the following conditions hold:
	\begin{itemize}
		\item $\Lambda \in \mathring{C}_{\beta}$ for some $\beta \in \mathcal{E}$ (i.e., the monopolist demand vector lies in the interior of a dual cone)
		\item There is a random vector $Z$ such that $\Lambda^{\theta,a} = \rho\theta\Lambda + \sqrt{\theta} Z$ with $\mathbb{E}[Z_e]=0\, \; \forall\,e\in E$ and $\mathbb{E}[||Z||_1]<\infty$
		\item There is a function $f$ such that $\forall\,e\in E$, we have $\mathbb{P}[|Z_{e}|>t]=\mathcal{O}(f(t))$ 
	\end{itemize}
	Then $\gamma^{\theta}=\mathcal{O}\left(\theta f\left(\sqrt{\theta}\right)\right)$
\end{theorem}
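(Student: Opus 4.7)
The plan is to exploit the dual representation $RC(v)=\max_{\alpha\in\mathcal{E}}\alpha^\intercal v$ to express the excess cost in terms of how often, and how far, the firms' realized demand vectors exit the cone $C_\beta$ containing the monopolist demand. Since $\beta$ is dual-feasible we have $RC(v)\geq\beta^\intercal v$, with equality iff $v\in C_\beta$; combined with $RC(\theta\Lambda)=\beta^\intercal\theta\Lambda=\beta^\intercal(\Lambda^{a,\theta}+\Lambda^{b,\theta})$ (using $\Lambda\in\mathring{C}_\beta\subseteq C_\beta$), the PoF decomposes as
\[
\gamma^\theta=\mathbb{E}\bigl[RC(\Lambda^{a,\theta})-\beta^\intercal\Lambda^{a,\theta}\bigr]+\mathbb{E}\bigl[RC(\Lambda^{b,\theta})-\beta^\intercal\Lambda^{b,\theta}\bigr],
\]
each term being nonnegative and vanishing pointwise on $\{\Lambda^{\cdot,\theta}\in C_\beta\}$. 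Since $\mathcal{E}$ is finite, the pointwise excess is at most $C_\mathcal{E}\|\Lambda^{\cdot,\theta}\|_1$ with $C_\mathcal{E}=\max_{\alpha\in\mathcal{E}}\|\alpha-\beta\|_\infty$, so the task reduces to bounding $\mathbb{E}\bigl[\|\Lambda^{a,\theta}\|_1\mathbf{1}_{\mathcal{B}^c}\bigr]$, where $\mathcal{B}=\{\Lambda^{a,\theta}\in C_\beta\}$ (and symmetrically for firm $b$).

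Next I would convert the geometric escape event $\mathcal{B}^c$ into a tail event for $Z$. Because $C_\beta$ is a cone, $\mathcal{B}^c$ is equivalent to $\rho\Lambda+Z/\sqrt{\theta}\notin C_\beta$. Interiority of $\Lambda$ in $C_\beta$ (together with homogeneity) supplies a radius $\delta>0$ such that $\{v:\|v-\rho\Lambda\|_\infty<\delta\}\subseteq C_\beta$; hence $\mathcal{B}^c$ implies $\|Z\|_\infty\geq\delta\sqrt{\theta}$, and a union bound yields $\mathbb{P}[\mathcal{B}^c]\leq\sum_{e\in E}\mathbb{P}[|Z_e|\geq\delta\sqrt{\theta}]=\mathcal{O}(f(\sqrt{\theta}))$.

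To finish, I split $\|\Lambda^{a,\theta}\|_1\leq\rho\theta\|\Lambda\|_1+\sqrt{\theta}\|Z\|_1$. The first piece contributes $\rho\theta\|\Lambda\|_1\cdot\mathbb{P}[\mathcal{B}^c]=\mathcal{O}(\theta f(\sqrt{\theta}))$, already matching the target rate. For the $\sqrt{\theta}\|Z\|_1$ piece I would use that edge-wise demand-splits are independent, so the coordinates $\{Z_e\}$ are independent, and that $|Z_e|\leq\sqrt{\theta}\max(\rho,1-\rho)\Lambda_e$ a.s.\ from the Binomial structure. Bounding $\mathbf{1}_{\mathcal{B}^c}\leq\sum_e\mathbf{1}_{|Z_e|\geq\delta\sqrt{\theta}}$ and expanding $\|Z\|_1=\sum_{e'}|Z_{e'}|$, independence gives $\mathbb{E}[|Z_{e'}|\mathbf{1}_{|Z_e|\geq\delta\sqrt{\theta}}]=\mathbb{E}[|Z_{e'}|]\mathbb{P}[|Z_e|\geq\delta\sqrt{\theta}]=\mathcal{O}(f(\sqrt{\theta}))$ for $e'\neq e$, while the a.s.\ bound gives $\mathbb{E}[|Z_e|\mathbf{1}_{|Z_e|\geq\delta\sqrt{\theta}}]\leq\sqrt{\theta}\Lambda_e\mathbb{P}[|Z_e|\geq\delta\sqrt{\theta}]=\mathcal{O}(\sqrt{\theta}f(\sqrt{\theta}))$. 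Using $\mathbb{E}[\|Z\|_1]<\infty$ and summing, $\mathbb{E}[\|Z\|_1\mathbf{1}_{\mathcal{B}^c}]=\mathcal{O}(\sqrt{\theta}f(\sqrt{\theta}))$, so multiplying by $\sqrt{\theta}$ yields $\mathcal{O}(\theta f(\sqrt{\theta}))$, as required.

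The main obstacle is precisely this last magnitude-times-rarity estimate: a naive union bound only controls $\mathbb{P}[\mathcal{B}^c]$, but $\|Z\|_1$ can in principle be large exactly when $\mathcal{B}^c$ occurs. Closing the gap requires either coordinate-wise independence combined with a deterministic magnitude bound on $Z_e$ (which is automatic for Binomial splits, as above), or an explicit moment-tail assumption such as $\mathbb{E}[|Z_e|\mathbf{1}_{|Z_e|\geq t}]=\mathcal{O}(tf(t))$; either is mild enough to accommodate the cases of practical interest, notably the sub-Gaussian regime where $f(t)=e^{-ct^2}$ and the bound becomes $\gamma^\theta=\mathcal{O}(\theta e^{-c\theta})$, consistent with Proposition~\ref{prop:2node}.
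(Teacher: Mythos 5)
Your proposal is correct and follows the same geometric skeleton as the paper's proof: interiority of $\Lambda$ in $C_\beta$ supplies a margin $\delta$, escaping the cone forces $\|Z\|_\infty\geq\delta\sqrt{\theta}$, a union bound over the finitely many edges controls the escape probability by $\mathcal{O}(f(\sqrt{\theta}))$, and the excess cost on the escape event is bounded by magnitude times rarity. Two things differ. First, your decomposition $\gamma^\theta=\mathbb{E}[RC(\Lambda^{a,\theta})-\beta^\intercal\Lambda^{a,\theta}]+\mathbb{E}[RC(\Lambda^{b,\theta})-\beta^\intercal\Lambda^{b,\theta}]$, exploiting $\beta^\intercal(\Lambda^{a,\theta}+\Lambda^{b,\theta})=RC(\theta\Lambda)$, is cleaner than the paper's, which introduces three (two of them random) optimal corner points $\alpha_1,\alpha_2,\beta$ and manipulates the combination $\theta(\rho\alpha_1+(1-\rho)\alpha_2-\beta)^\intercal\Lambda+\sqrt{\theta}\,\mathbb{E}[(\alpha_1-\alpha_2)^\intercal Z]$; your version makes the nonnegativity and the ``vanishes on $C_\beta$'' property of each summand immediate. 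Second, and more substantively, you correctly identify the crux as the estimate $\mathbb{E}\bigl[\|Z\|_1\mathds{1}_{\mathcal{B}^c}\bigr]=\mathcal{O}(\sqrt{\theta}f(\sqrt{\theta}))$ and actually prove it for Binomial splits via coordinate independence plus the almost-sure bound $|Z_e|\leq\sqrt{\theta}\max(\rho,1-\rho)\Lambda_e$. The paper handles this step by asserting that finiteness of $\mathbb{E}[\|Z\|_1]$ makes the conditional expectation $\mathbb{E}[\|Z\|_1\mid\mathcal{B}^c]$ finite; as written that only yields $\mathbb{E}[\|Z\|_1\mathds{1}_{\mathcal{B}^c}]\leq\mathbb{E}[\|Z\|_1]=\mathcal{O}(1)$, hence an $\mathcal{O}(\sqrt{\theta})$ contribution from that term rather than the claimed $\mathcal{O}(\theta f(\sqrt{\theta}))$ rate (conditioning on a rare tail event generically inflates the conditional mean, so no uniform-in-$\theta$ constant bound follows from integrability alone). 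Your observation that the theorem's stated hypotheses do not by themselves close this gap --- and that one needs either the Binomial structure or an explicit truncated-moment condition such as $\mathbb{E}[|Z_e|\mathds{1}_{|Z_e|\geq t}]=\mathcal{O}(tf(t))$ --- is accurate, and your argument supplies exactly the estimate the paper's proof implicitly relies on.
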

\noindent In the particular case of binomial demand splits, note that as long as $Z_i$ has sub-Gaussian tails, $\gamma^\theta = O(\theta e^{-c\theta})$ for some constant $c$.
\begin{proof}
	Define $x=\mathbb{E}[\Lambda^{\theta,a}]=\rho\Lambda$ and $y=\mathbb{E}[\Lambda^{\theta,b}]=(1-\rho)\Lambda$. As we said earlier, the boundaries between dual cones are hyperplanes $\mathcal{P}_{(\alpha,\beta)},\; \forall \: (\alpha,\beta) \in \mathcal{E}$. Thus we can define the orthogonal projections of $x$ and $y$ on each of those boundaries. Since there is a finite number of corner points, a strictly positive minimum distance is reached over all planes. We define the minimum distance between the expected demand vectors of both companies and the cone's boundary as: 
	\begin{align*}
	\delta=\min\left\{\underset{\alpha \in \mathcal{E}}{\min}\; d(x,\mathcal{P}_{(\alpha,\beta)}), \; \underset{\alpha \in \mathcal{E}}{\min}\; d(y,\mathcal{P}_{(\alpha,\beta)}) \right\}
	\end{align*}
	Moreover, since $\Lambda \in \mathring{C}_{\alpha}$, we have that the Euclidean balls around these points are contained in the cone $C_\alpha$, i.e., $B(x,\delta)\subset C_{\alpha}$ and $B(y,\delta)\subset C_{\alpha}$. Note that for any demand-split, we have $\lambda \in B(x,\delta)$ if and only if $\Lambda-\lambda \in B(y,\delta)$. 
	
	When we scale the demand vectors, since we use orthogonal projections and the distance is homogeneous, then $\delta$ is also homogeneous with respect to $\theta$. Thus, if our random demand-splits lie within a $\theta\delta$ ball around their expected values, we preserve the same corner point and undergo no loss in efficiency. To find an upper bound on the loss of efficiency when our random variable is outside those spheres, we define $\bar{\beta}\triangleq\underset{\beta \in \mathcal{E}}{\max}\|\beta\|_{\infty}$.
	
	We also denote $\alpha_1$, $\alpha_2$ and $\beta$ the corner points such that $RC(\Lambda^{\theta,a})=\alpha_1^\intercal\Lambda^{\theta,a}$, $RC(\theta\Lambda-\Lambda^{\theta,a})=\alpha_2^\intercal(\theta\Lambda-\Lambda^{\theta,a})$, and $RC(\theta\Lambda)=\beta^\intercal\theta\Lambda$. What is important to notice is that the corner points $\alpha_1$ and $\alpha_2$ implicitly depend on $Z$ and are consequently random variables. 
	
	Now, we can bound the normalized loss in rebalancing costs under scaling as follows:
	\begin{align*}
	\gamma^{\theta} &= \mathbb{E}\left[\alpha_1^\intercal\Lambda^{\theta,a}+\alpha_2^\intercal(\theta\Lambda-\Lambda^{\theta,a})- \beta^\intercal\theta\Lambda\right]\\
	&=\theta(\rho\alpha_1+(1-\rho)\alpha_2-\beta)^\intercal\Lambda  + \sqrt{\theta}\mathbb{E}\left[(\alpha_1-\alpha_2)^\intercal Z\right]\\
	&\leq 2\theta\bar{\beta}||\Lambda||_1\, \mathbb{P}\left[\exists\, e\in E\mbox{ s.t.}\; \sqrt{\theta}|Z_e| > \theta\delta\right] + 
	\sqrt{\theta}||\alpha_1-\alpha_2||_{\infty}\mathbb{E}\left[\ ||Z||_{1}.\mathds{1}_{\left\{\exists\, e\in E\mbox{ s.t.}\; \sqrt{\theta}|Z_e | > \theta\delta\right\}}\right]\\
	&\leq \left(2\theta\bar{\beta}||\Lambda||_1 + 2\bar{\beta}\sqrt{\theta}\mathbb{E}\left[\ ||Z||_{1} \; \big|\;\exists\,\, e\in E \mbox{ s.t.}\; |Z_e| > \sqrt{\theta}\delta\right]\right)
	\times \mathbb{P}\left[\exists\, e\in E\mbox{ s.t.}\; |Z_e| > \sqrt{\theta}\delta\right]
	\end{align*}
	The first inequality follows from applying H\"{o}lder's inequalities, and the requirement that for positive PoF, the demand-split vector must lie in a cone other than $\mathring{C}_{\beta}$. Now we want to prove that the expectation inside the brackets in the last inequality is finite. For this, recall we assumed that $\mathbb{E}[||Z||_{1}]$ is finite; furthermore, due to the law of total expectation, and because the expectations are positive, the finiteness of $\mathbb{E}[||Z||_{1}]$ implies the finiteness of any conditional expectation on $||Z||_{1}$, and thus the one we want. The only remaining point is to upper bound the probability of having $Z$ out of the ball. Via a union bound, we get:
	\begin{align*}
	\mathbb{P}\left[\exists\, e\in E \mbox{ s.t.} \; |Z_e| > \delta \sqrt{\theta}\right]
	\leq N^2\mathbb{P}\left[ |Z_e| > \delta\sqrt{\theta}\right]
	= \mathcal{O}(f(\sqrt{\theta}))
	\end{align*}
	Combining the inequalities, we get the desired result:
	$$\gamma^{\theta}=\mathcal{O}\left(\theta f(\sqrt{\theta})\right)$$
\end{proof}


Similarly, we can also characterize the square-root decay regime:
\begin{theorem}
	\label{thm:sqroot}
	(\textbf{Fragmentation-affected regime}) Given network $G$, monopolist demand vector $\Lambda$ and demand-split $\rho$, suppose that the following conditions hold:
	\begin{itemize}
		\item $\exists\, \alpha \neq \beta \in \mathcal{E}$ such that $\Lambda \in C_{\alpha}\cap C_{\beta}$ (i.e., the monopolist demand vector lies in the hyperplane defined by the intersection of two or more dual cones)
		\item There is a r.v. $\xi$ such that $\Lambda^{\theta,a} = \rho\theta\Lambda + \sqrt{\theta}Z$ with $\mathbb{E}[Z_e]=0\, \; \forall\,e\in E$ and $\mathbb{E}[||Z||_1]<\infty$
		\item $\forall\,e\in E$, we have $\mathbb{P}\left[|Z_e|>t\right]=\mathcal{O}\left(f(t)\right)$ for some $f(t)=\mathcal{O}(t^{-1})$
	\end{itemize}
	Then $\gamma^{\theta}=\Theta\left(\theta^{1/2}\right)$
\end{theorem}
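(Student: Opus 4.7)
The plan is to sandwich $\gamma^\theta$ between matching $\Omega(\sqrt{\theta})$ and $O(\sqrt{\theta})$ bounds. The upper bound is essentially the calculation from the proof of Theorem~\ref{thm:expdecay} without the exponential tail savings, so the novel content lies in the lower bound, which directly exploits the dual-degeneracy hypothesis: at $\Lambda$, the monopolist's rebalancing cost admits two distinct certifying dual vertices $\alpha \neq \beta$ with $\alpha^\intercal\Lambda = \beta^\intercal\Lambda = RC(\Lambda)$, so random $\sqrt{\theta}$-fluctuations in the split tend to push the two firms' realized demands to opposite sides of the hyperplane $C_\alpha \cap C_\beta$.

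For the upper bound, I would let $\alpha_1, \alpha_2 \in \mathcal{E}$ denote the (random) optimal dual corners for the two firms and let $\alpha$ be any optimal corner for $\Lambda$. Decomposing exactly as in Theorem~\ref{thm:expdecay},
\begin{equation*}
\gamma^\theta = \theta\,\mathbb{E}\bigl[(\rho\alpha_1+(1-\rho)\alpha_2-\alpha)^\intercal \Lambda\bigr] + \sqrt{\theta}\,\mathbb{E}\bigl[(\alpha_1-\alpha_2)^\intercal Z\bigr],
\end{equation*}
the first term is non-positive by optimality of $\alpha$ for $\Lambda$, and the second is $O(\sqrt{\theta})$ via H\"older's inequality, using $\|\alpha_1-\alpha_2\|_\infty \leq 2\bar{\beta}$ and the assumed finiteness of $\mathbb{E}[\|Z\|_1]$. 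This delivers $\gamma^\theta = O(\sqrt{\theta})$.

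For the lower bound, the key step is to bypass the random optimal duals entirely and instead certify $RC(\Lambda^{\theta,a}) \geq \max(\alpha^\intercal \Lambda^{\theta,a}, \beta^\intercal \Lambda^{\theta,a})$ using the two fixed corners from the degeneracy, and similarly for firm $b$. Substituting $\Lambda^{\theta,a} = \rho\theta\Lambda + \sqrt{\theta}Z$ and $\Lambda^{\theta,b} = (1-\rho)\theta\Lambda - \sqrt{\theta}Z$ into these maxima, and using $\alpha^\intercal\Lambda = \beta^\intercal\Lambda = RC(\Lambda)$ together with the algebraic identity $\max(u,v) - \min(u,v) = |u-v|$, the deterministic pieces combine cleanly to give
\begin{equation*}
RC(\Lambda^{\theta,a}) + RC(\Lambda^{\theta,b}) \geq \theta\, RC(\Lambda) + \sqrt{\theta}\,|(\alpha-\beta)^\intercal Z|,
\end{equation*}
and hence $\gamma^\theta \geq \sqrt{\theta}\,\mathbb{E}[|(\alpha-\beta)^\intercal Z|]$.

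The main obstacle is to verify that $\mathbb{E}[|(\alpha-\beta)^\intercal Z|]$ is bounded below by a strictly positive constant independent of $\theta$. Since $\alpha \neq \beta$, the linear functional $(\alpha-\beta)^\intercal(\cdot)$ is non-trivial, and for binomial splits the coordinates of $Z$ are independent and centered with variances $\rho(1-\rho)\Lambda_e$. Provided the support of $\Lambda$ is such that $(\alpha-\beta)^\intercal Z$ has strictly positive variance — a mild non-degeneracy condition intrinsic to the hypothesis that $\alpha$ and $\beta$ are distinct active corners at $\Lambda$ — a Paley--Zygmund second-moment argument, with the tail hypothesis $f(t)=O(t^{-1})$ supplying the required moment control, yields $\mathbb{E}[|(\alpha-\beta)^\intercal Z|] \geq c > 0$ uniformly in $\theta$, completing the proof.
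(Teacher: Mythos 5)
Your proof is correct, and your lower bound is essentially identical to the paper's: the paper also certifies each firm's cost from below by $\max(\alpha^\intercal\Lambda^{\theta,\cdot},\beta^\intercal\Lambda^{\theta,\cdot})$, cancels the deterministic parts using $\alpha^\intercal\Lambda=\beta^\intercal\Lambda$, and arrives at $\gamma^\theta \geq \theta^{1/2}\,\mathbb{E}\left[\left|(\alpha-\beta)^\intercal Z\right|\right]$ (written there with indicator functions rather than your $\max-\min$ identity). Your upper bound, however, takes a genuinely different and arguably cleaner route. The paper conditions on whether $Z$ stays inside a $\sqrt{\theta}\delta$-ball fitting in $C_\alpha\cup C_\beta$ and invokes the tail condition $f(t)=\mathcal{O}(t^{-1})$ to control the escape event; you instead observe that in the decomposition $\gamma^\theta = \theta\,\mathbb{E}[(\rho\alpha_1+(1-\rho)\alpha_2-\alpha)^\intercal\Lambda] + \sqrt{\theta}\,\mathbb{E}[(\alpha_1-\alpha_2)^\intercal Z]$ the $\theta$-order term is automatically non-positive, since each random optimal corner $\alpha_i$ is dual-feasible and hence $\alpha_i^\intercal\Lambda \leq RC(\Lambda) = \alpha^\intercal\Lambda$. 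This kills the leading term by pure LP duality, and H\"older plus $\mathbb{E}[\|Z\|_1]<\infty$ finishes the job without using the tail hypothesis at all for the upper bound -- a small but real simplification over the paper's argument. One caveat on your final step: Paley--Zygmund needs a moment strictly higher than the one you are lower-bounding, and the stated tail condition $f(t)=\mathcal{O}(t^{-1})$ does not by itself supply a finite second (let alone fourth) moment of $(\alpha-\beta)^\intercal Z$; for the canonical binomial splits everything is sub-Gaussian and the argument goes through, but in the stated generality you should either add a uniform-in-$\theta$ anti-concentration/variance assumption or argue via the CLT for the specific splitting model. To be fair, the paper simply asserts $\mathbb{E}\left[\left|(\alpha-\beta)^\intercal Z\right|\right]=\Omega(1)$ without comment, so you are if anything more careful than the source; your observation that positivity requires $\alpha-\beta$ to be non-constant on the support of the demand graph (i.e., not a mere translation in the sense of Proposition~\ref{prop:degen}) is a point the paper only addresses obliquely in its discussion after Theorem~\ref{thm:degen}.
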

\begin{proof}
	First we show the lower bound. Note that in this setting, we have $RC(\Lambda)=\alpha^\intercal\Lambda=\beta^\intercal\Lambda$ due to the hypothesis of lying on the interface of two cones. Furthermore, $\alpha^\intercal\Lambda^{\theta,a} \geq \beta^\intercal\Lambda^{\theta,a}$ if and only if $\beta^\intercal(\theta\Lambda-\Lambda^{\theta,a}) \geq \alpha^\intercal(\theta\Lambda-\Lambda^{\theta,a})$. Due to this, it is sufficient to separate two cases depending on which of these two corner points gives the best value. Furthermore, ignoring all other corner points gives a lower bound because the definition of the cones is such that changing to another corner point means that this third point yields a higher score. Thus we have:
	
	\begin{align*}
	\gamma^{\theta}&=\mathbb{E}\left[RC(\Lambda^{\theta,a})+RC(\theta\Lambda-\Lambda^{\theta,a})-RC(\theta\Lambda)\right]\\
	&\geq \mathbb{E}\big[(\alpha^\intercal\Lambda^{\theta,a}+\beta^\intercal(\theta\Lambda-\Lambda^{\theta,a})-\theta\beta^\intercal\Lambda).\mathds{1}_{\alpha^\intercal Z \geq \beta^\intercal Z}+
	(\beta^\intercal\Lambda^{\theta,a}+\alpha^\intercal(\theta\Lambda-\Lambda^{\theta,a})-\theta\alpha^\intercal\Lambda).\mathds{1}_{\beta^\intercal Z \geq \alpha^\intercal Z}\big]\\
	&= \theta^{1/2}.\mathbb{E}\left[(\alpha-\beta)^\intercal Z\left(\mathds{1}_{\alpha^\intercal Z \geq \beta^\intercal Z}-\mathds{1}_{\beta^\intercal Z \geq \alpha^\intercal Z}\right)\right]\\
	&= \theta^{1/2}\mathbb{E}\left[\left|(\alpha-\beta)^\intercal Z\right|\right]= \Omega(\theta^{1/2})
	\end{align*}
	
	Thus the first part of the proof is finished.
	
	Next we focus on the upper bound of $\gamma^{\theta}$. The proof will reuse ideas of the previous proof. In fact, let's separate $\alpha$ and $\beta$ of the other corner points. The minimum radius $ \delta$ such that a ball of radius $\delta$ can fit in $C_{\alpha}\cup C_{\beta}$ is well defined and is strictly positive. Thus, we can separate the rebalancing loss in two parts such that:
	\begin{align*}
	\gamma^{\theta}
	&=\mathbb{E}\Big[\left(RC(\Lambda^{\theta,a})+RC(\Lambda^{\theta,b})-RC(\theta\Lambda)\right)
	\left(\mathds{1}_{\{\forall\, e\in E,\,  Z_e \leq \sqrt{\theta}\delta\}} +\mathds{1}_{\{\exists\, e\in E \mbox{ s.t.} \, Z_e > \sqrt{\theta}\delta\}}\right)\Big]\\
	&\leq \theta^{1/2}.\mathbb{E}\left[\big|(\alpha-\beta)^\intercal Z\big|\mathds{1}_{\{\forall\,e\in E,\, Z_e \leq \sqrt{\theta} \delta\}}\right] +\mathcal{O}(f(\sqrt{\theta}))\\
	&\leq \theta^{1/2}.\mathbb{E}\left[\big|(\alpha-\beta)^\intercal Z\big|\right] +\mathcal{O}(f(\sqrt{\theta}))
	= \mathcal{O}(\theta^{1/2})
	\end{align*}
	If there are more than two corner points the proof is similar but has to be slightly modified. The lower bound is the same by considering only two of those points. For the upper bound, there are still two parts by defining the ball fitting in all cones. One part decays exponentially, and the second one can be upper bounded by summing the absolute value of the random variable $(\alpha-\beta)^\intercal Z$ over all pairs of corner points which are optimal for $\Lambda$.
\end{proof}

We state Theorems \ref{thm:expdecay} and \ref{thm:sqroot} in greater generality than required, so as to admit different distributions for random demand-splitting in the duopolist setting. 
For the setting of constant $\Lambda$ and binomial splitting we defined earlier, we have that $Z$ is sub-Gaussian, i.e., $\forall\,e\in E$, we have $\mathbb{P}[|Z_e|>t]=\mathcal{O}(e^{-at^2})$ for some constant $a$; thus we get an exponential decay when the demands are unbalanced. 
If instead $\Lambda_{ij}$ comes from a Poisson process with mean $\mathbb{E}[\Lambda_{ij}] = \lambda_{ij}$, and furthermore, under scaling, its expectation is $\theta\lambda_{ij}$. In this case, assuming probabilistic demand splitting, the random demand experienced by each firm follows a Poisson distribution with mean and variance linear in $\theta$; note that this distribution has sub-Exponential tails, i.e.,  $\forall e\in E$, we have $\mathbb{P}[|Z_e|>t]=\mathcal{O}(e^{-at})$ for some constant $a$. The decay in $\gamma^{\theta}$ is now slower than the binomial case; in particular, when the demand is unbalanced, we have $\gamma^{\theta}=\mathcal{O}(\theta e^{-a\sqrt{\theta}})$.

Till now we considered the case of homogeneous market share $\rho$; however, we can easily extend our results to the case of inhomogeneous market shares percentage $\rho$. This generalization is interesting as market share can be neighborhood-dependent in practice. For instance, different parts of the city support different demographics, and depending on this, one firm might be preferred over the second one due to its quality of service or cheaper price. To capture this, we consider $\rho=\{\rho_{ij}\}$ to be a vector, and define the demand-split for firm $a$ as:
$$\mathbb{E}[\Lambda^{\theta,a}]=\theta\rho\odot\Lambda \qquad  \qquad \mathbb{E}[\Lambda^{\theta,b}]=\theta(1-\rho)\odot\Lambda$$
where $\odot$ is the Schur (or element-wise) product of $\rho$ and $\Lambda$.  

Due to the inhomogeneity in splitting, we can only take advantage of the homogeneity of $RC(\cdot)$ with $\theta$. Geometrically, this means that the expectation of the split demands no longer lie on the same line as the monopolist demand. Now let $\alpha$, $\beta$ and $\eta$ denote the optimal corner points of the dual polytope for demand vectors $\rho\odot\Lambda$, $(1-\rho)\odot\Lambda$ and $\Lambda$ respectively -- note that unlike before, these corner points can now be different. However, we  can still characterize the PoF as before, via the following theorem: 
\begin{theorem}
	\label{thm:nonuniform}
	Given network $G$, demand vector $\Lambda$ and demand-split vector $\rho$, assume the following hold:
	\begin{itemize}
		\item $\Lambda \in C_{\eta}$ for some $\eta \in \mathcal{E}$ (i.e., the monopolist demand vector lies in the interior of a dual cone)
		\item There is a r.v. $Z$ such that $\Lambda^{\theta,a} = \theta\rho\odot\Lambda + \sqrt{\theta}Z$ with $\mathbb{E}[Z_e]=0\,\forall\,e\in E$ and $\mathbb{E}[||Z||_1]<\infty$
		\item $\forall\,e\in E$, we have $\mathbb{P}[|Z_e|>t]=\mathcal{O}(f(t))$ with $f(t)=\mathcal{O}(t^{-1})$
	\end{itemize}
	Now suppose we define: 
	$$ L=\alpha^\intercal\rho\odot\Lambda+\beta^\intercal(1-\rho)\odot\Lambda-\eta^\intercal\Lambda$$
	Then we have: 
	\begin{itemize}
		\item $\rho\odot\Lambda \in \mathring{C}_{\alpha} \; and \; (1-\rho)\odot\Lambda \in \mathring{C}_{\beta}\Rightarrow \gamma^{\theta}= L\theta +\mathcal{O}(\theta f(\sqrt{\theta}))$
		\item $\rho\odot\Lambda \in C_{\alpha}\setminus\mathring{C}_{\alpha} \; or \; (1-\rho)\odot\Lambda \in C_{\beta}\setminus\mathring{C}_{\beta} \Rightarrow \gamma^{\theta} = L\theta + \Theta(\theta^{1/2})$
	\end{itemize}
\end{theorem}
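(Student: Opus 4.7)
The plan is to decompose $\gamma^\theta$ into a deterministic linear term $L\theta$ plus stochastic fluctuation terms, and to handle the latter by adapting the arguments of Theorems~\ref{thm:expdecay} and~\ref{thm:sqroot}. Since $\Lambda \in C_\eta$, homogeneity gives $RC(\theta\Lambda) = \theta \eta^\intercal \Lambda$. Writing $\Lambda^{\theta,a} = \theta\rho\odot\Lambda + \sqrt{\theta}\,Z$ and correspondingly $\Lambda^{\theta,b} = \theta(1-\rho)\odot\Lambda - \sqrt{\theta}\,Z$, the PoF becomes
\[
\gamma^\theta = \mathbb{E}[RC(\Lambda^{\theta,a})] + \mathbb{E}[RC(\Lambda^{\theta,b})] - \theta\eta^\intercal\Lambda.
\]
Observe that $L = RC(\rho\odot\Lambda) + RC((1-\rho)\odot\Lambda) - RC(\Lambda) \geq 0$ by sublinearity of $RC(\cdot)$, so $L\theta$ represents a deterministic price paid by any fragmentation (even absent fluctuations) whenever the two firms' expected demand vectors fall in cones different from the monopolist's.

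For the interior case, $\rho\odot\Lambda \in \mathring{C}_\alpha$ and $(1-\rho)\odot\Lambda \in \mathring{C}_\beta$ imply there exists $\delta > 0$ with $B(\rho\odot\Lambda,\delta) \subset C_\alpha$ and $B((1-\rho)\odot\Lambda,\delta) \subset C_\beta$. On the event $E_\theta = \{\|Z\|_\infty \leq \delta\sqrt{\theta}\}$, homogeneity places $\Lambda^{\theta,a}/\theta$ and $\Lambda^{\theta,b}/\theta$ inside their respective cones, so deterministically $RC(\Lambda^{\theta,a}) = \alpha^\intercal\Lambda^{\theta,a}$ and $RC(\Lambda^{\theta,b}) = \beta^\intercal\Lambda^{\theta,b}$. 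Summing and subtracting $\theta\eta^\intercal\Lambda$ yields $L\theta + \sqrt{\theta}(\alpha-\beta)^\intercal Z$ on $E_\theta$, whose expectation contributes $L\theta$ modulo $\mathcal{O}(\theta f(\sqrt{\theta}))$ because $\mathbb{E}[Z]=0$ and the $\mathbb{E}[Z\mathds{1}_{E_\theta^c}]$ correction is controlled by the tail assumption. The off-$E_\theta$ contribution is bounded exactly as in Theorem~\ref{thm:expdecay} using $\bar\beta = \max_{\beta\in\mathcal{E}}\|\beta\|_\infty$, Hölder's inequality, and a union bound over edges, giving an additional $\mathcal{O}(\theta f(\sqrt{\theta}))$ error.

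For the boundary case, suppose (without loss of generality on the firm) that $\rho\odot\Lambda \in C_\alpha \cap C_{\alpha'}$ for some distinct $\alpha, \alpha' \in \mathcal{E}$. I would mirror the proof of Theorem~\ref{thm:sqroot} by partitioning the sample space according to whether $\alpha^\intercal Z \geq (\alpha')^\intercal Z$, and on each half lower-bound $RC(\Lambda^{\theta,a})$ by the linear functional corresponding to the optimal dual point. The matching upper bound uses the ball-fitting-in-union-of-cones argument of Theorem~\ref{thm:sqroot}, with the analogous radius replacing $\delta$. The resulting fluctuation in $RC(\Lambda^{\theta,a})$ contributes a term of order $\mathbb{E}[|(\alpha-\alpha')^\intercal Z|]\sqrt{\theta}$; an identical analysis for $RC(\Lambda^{\theta,b})$ adds its own $\Theta(\sqrt{\theta})$ contribution. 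Crucially, the $L\theta$ drift survives intact because $\alpha^\intercal\rho\odot\Lambda = (\alpha')^\intercal\rho\odot\Lambda = RC(\rho\odot\Lambda)$ on the boundary, so the value of $L$ does not depend on the choice of dual optimum.

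The main obstacle will be cleanly separating the deterministic $L\theta$ drift from the $\Theta(\sqrt{\theta})$ fluctuation. Both previous theorems tacitly relied on $\mathbb{E}[\Lambda^{\theta,a}]$, $\mathbb{E}[\Lambda^{\theta,b}]$, and $\Lambda$ lying on a common ray through the origin, so no linear term appeared; in the inhomogeneous setting the split expected demands live on three distinct rays, producing this new drift. One must verify that the tail-event bounds controlling rare excursions of $\Lambda^{\theta,a}$ outside its cone still yield the stated error terms $\mathcal{O}(\theta f(\sqrt{\theta}))$ and $o(\sqrt{\theta})$, and that the Hölder bound using $\bar\beta$ on these rare events does not accidentally dominate or swamp the linear contribution. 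A careful bookkeeping of the events $E_\theta$ and $E_\theta^c$ across both summands, using $\mathbb{E}[Z]=0$ to cancel the $\sqrt{\theta}$-order terms on the good event, should suffice.
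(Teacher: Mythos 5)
Your proposal is correct and follows essentially the same route as the paper's proof: isolate the deterministic drift $L\theta$ arising from the split expected demands lying on distinct rays, handle the interior case by the ball-in-cone concentration argument of Theorem~\ref{thm:expdecay}, and handle the boundary case by the two-corner-point decomposition of Theorem~\ref{thm:sqroot}, summing the per-firm contributions. The only cosmetic difference is that you work with a single good event covering both firms and cancel the $\sqrt{\theta}$-order term via $\mathbb{E}[Z]=0$, whereas the paper bounds each company's expected dual value separately before summing; the underlying estimates are identical.
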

\begin{proof}
The proof is the along the same lines as the homogeneous case, except that each companies' bounds have to be treated separately. The details are given in Appendix~\ref{appsec:proofs}.
\end{proof}
Taken together, the above results thus completely characterize the decay rates of $\gamma^\theta$. Note that in the general heterogeneous case, we actually get three regimes for $\gamma^{\theta}$: if $L=0$, then we get the exponential decay and square root growth as before; in addition, if $L>0$, then we now get a \emph{linear growth regime}.
Also, note that all the above proofs treats each company separately and then sums to get the required bounds. Thus, the results can easily be adapted for more than 2 firms, leading to similar conclusions. For sake of simplicity we stick to the case of 2 firms, but the general result is presented as Theorem~\ref{thm:anycompany} in Appendix~\ref{appsec:proofs}.

The above results thus indicate that the PoF undergoes a phase transition, going from converging to $0$ to diverging to $\infty$ as the monopolist demand vector $\Lambda$ approaches a boundary hyperplane between two dual cones. From an operational point of view, it is thus critical to understand conditions for diverging PoF; in particular, we desire conditions which are more intuitive rather than our characterization in terms of dual degeneracy. We now try to obtain such a condition.

Consider a setting with demands $\lambda_i=\sum_j\lambda_{ji}-\lambda_{ij}$, and let $(x^*, \; \alpha^*)$ denote a pair of optimal solutions for the primal and dual respectively; then by complementary slackness, we have that $x_{ij}^*>0 \Rightarrow \alpha_i^*-\alpha_j^*=\tau_{ij}$.
Now consider subgraphs $H$ and $\hat H$ be two subgraphs defined by the set of all nodes $V$ and edges:
\begin{align*}
E_{H}=\{(i,j) \; | \; x_{ij}>0\}\qquad \qquad
E_{\hat H}=\{(i,j) \; | \; \alpha_i^*-\alpha_j^*=\tau_{ij}\}
\end{align*}
$H$ thus represents the support of positive rebalancing flows for the monopolist setting; from the non-negativity of $\tau_{ij}$, it is easy to observe that $E_H$ forms a directed acyclic graph.
$\hat H$ is the subgraph defined by tight dual constraints. By complementary slackness, we have $E_{H} \subset E_{\hat H}$. 
We now have the following theorem that characterizes dual degeneracy in the rebalancing cost LP \eqref{eq:primal}.
\begin{theorem}
	\label{thm:degen}	
	For the optimal rebalancing solution to be dual degenerate, a necessary condition is that $E_H$ contains at least two connected components. Moreover, this is sufficient with probability $1$ under small random perturbations of the edge-weights $\tau_{ij}$.
\end{theorem}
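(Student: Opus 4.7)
The plan is to treat the two directions separately: necessity follows directly from complementary slackness, while the generic sufficient direction requires a dimension count on the face of dual optima together with a measure-zero argument to rule out non-generic collapse.

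For necessity, suppose the dual is degenerate at $\Lambda$, so there exist two optimal corner points $\alpha^{*}, \beta^{*} \in \mathcal{E}$ distinct modulo the global translation $\mathbf{1}$. Let $x^{*}$ be any primal optimum with support $E_H$; applying complementary slackness against both duals gives $\alpha^{*}_i - \alpha^{*}_j = \tau_{ij} = \beta^{*}_i - \beta^{*}_j$ for every $(i,j) \in E_H$. Therefore $\alpha^{*} - \beta^{*}$ is constant on each undirected connected component of $(V, E_H)$; since it cannot be a single global constant, $(V, E_H)$ must contain at least two components.

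For sufficiency, take a basic optimal primal $x^{*}$ whose support $E_H$ is, by assumption, a forest with $k \geq 2$ undirected components $U_1, \ldots, U_k$ spanning $V$. Fix any $\tilde{\alpha}$ satisfying $\tilde{\alpha}_i - \tilde{\alpha}_j = \tau_{ij}$ on $E_H$; then every complementary-slack $\alpha$ takes the form $\alpha_i = \tilde{\alpha}_i + c_a$ for $i \in U_a$, parametrized by $c = (c_1, \ldots, c_k)$. The dual-feasibility inequalities on cross-component edges $(i,j)$ with $i \in U_a, j \in U_b$ reduce to $c_a - c_b \leq d_{ij}$ for explicit constants $d_{ij}$ depending linearly on $\tau$, cutting out a bounded polytope $P$ in the quotient $\mathbb{R}^k / \langle \mathbf{1} \rangle$. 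The set of dual optima corresponds exactly to $P$, so dual degeneracy is equivalent to $\dim P \geq 1$, and this is in turn equivalent to $P$ having at least two vertices since $P$ is bounded.

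The main obstacle is to rule out the collapse $\dim P = 0$. Such a collapse requires at least $k - 1$ of the inequalities $c_a - c_b \leq d_{ij}$ to be simultaneously tight and, viewed as edges on the component graph, to span all $k$ components. Each tightness imposes a linear equation on $\tau$, and a spanning-tree configuration gives $k - 1$ independent equations, confining $\tau$ to a codimension-$(k-1)$ algebraic subset of $\mathbb{R}^{|E|}$. Any absolutely continuous random perturbation of $\tau$ avoids this measure-zero set almost surely. Standard LP-stability arguments additionally guarantee that small perturbations preserve a unique basic primal optimum whose support $E_H$ has the same component count. Combining both directions, on the perturbed system the dual is degenerate if and only if $E_H$ has at least two connected components, with probability $1$.
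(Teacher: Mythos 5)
Your necessity argument is correct and essentially the paper's own: complementary slackness against a primal optimum pins down $\alpha_i-\alpha_j=\tau_{ij}$ on $E_H$, so two optimal duals can differ only by per-component constants, and a single spanning component forces them to coincide modulo the global translation. For sufficiency you take a genuinely different route. The paper argues constructively: generically $E_{H}=E_{\hat H}$, so every cut edge of a component $A$ has strictly positive dual slack, and shifting the dual variables on $A$ by the minimum such slack produces a second optimal dual point with the same objective value. You instead parametrize the entire optimal dual face by the per-component shifts $c\in\mathbb{R}^{k}/\langle\mathbf{1}\rangle$, identify it with the difference-constraint polytope $P$ cut out by the cross-component inequalities, and reduce the theorem to showing that the collapse $\dim P=0$ is non-generic. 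This reformulation is clean and arguably more illuminating than the paper's explicit construction, but it hinges on the genericity step, and that step as written is wrong.

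The error is in the claim that ``each tightness imposes a linear equation on $\tau$'' so that a spanning tree of tight constraints confines $\tau$ to a codimension-$(k-1)$ set. A spanning tree of tight constraints $c_a-c_b=d_{ij}$ imposes no relation on $\tau$ whatsoever: it merely determines the $k-1$ free coordinates of $c$. Worse, every vertex of the (always nonempty, bounded) polytope $P$ carries $k-1$ linearly independent tight difference constraints, which for such constraints necessarily form a spanning tree of the component graph; so your criterion is met for every $\tau$, and your count would place all of $\mathbb{R}^{|E|}$ in a codimension-$(k-1)$ subset. The correct argument is that $P=\{c^{*}\}$ forces the outward normals of the tight constraints to positively span the $(k-1)$-dimensional quotient, which requires at least $k$ tight constraints and hence at least one cycle in the tight-constraint multigraph on the components; each independent cycle yields one nontrivial relation $\sum\pm d_{ab}=0$, in which the cross-component travel times enter with coefficients $\pm1$ and are disjoint from the $E_H$-edges determining $\tilde\alpha$. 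That gives codimension at least one---which is all you need---and an absolutely continuous perturbation avoids it almost surely. With this repair (and retaining your remark that small perturbations preserve the primal support and its component count), your proof goes through and matches the theorem.
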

\begin{proof}
	First, suppose $E_H$ has a single connected component. Now by complementary slackness, the dual variables $\alpha$ are uniquely defined by: $\forall (i,j)\in E_H, \; \alpha_i-\alpha_j=\tau_{ij}$. 
	This shows the necessity of our condition.

	Next, note that under random perturbations of the edge weights $\tau_{ij}$, we have $E_{H}=E_{\hat H}$ with probability $1$. Now suppose that $E_{H}$ (and hence $E_{\hat H}$) comprises of multiple connected components. We now show that we can extend it without modifying the LP value or the saturated inequalities.

	Let $A$ be one of the connected components and $\delta(A)$ denote the edges in the cut of $A$; we define $\epsilon=\inf_{(i,j)\in \delta{(A)}} |\alpha_i-\alpha_j-\tau_{ij}| >0$ (by definition of $E_{\hat H}$). 
	We now consider a new dual corner point defined by $\beta_i=\beta_i+\epsilon$ if $i\in A$ and $\beta_i=\alpha_i$ otherwise. This has no incidence with respect to the saturated inequalities, and it will saturate one inactive inequality thanks to its definition. If (i,j) are in A then the argument of the uniform translation applies (see proposition \ref{prop:degen} in Appendix~\ref{appsec:proofs}). Consequently, this new edge is now active and connects A to another connected component. Furthermore, the score for $\beta$ is the same as for $\alpha$ since we have:
	\begin{align*}
	\sum_i \alpha_i \lambda_i&=\sum_{(i,j)} (\alpha_i-\alpha_j)x_{ij}^*
	=\sum_{(i,j)\in E_H} (\alpha_i-\alpha_j)x_{ij}^*
	\end{align*}
	Thus $\alpha_i-\alpha_j\neq\beta_i-\beta_j \; \Leftrightarrow x_{ij}^*=0$, and this modification has no impact on the score.
	Finally, since we have more constraints than necessary to uniquely characterize a corner point, this means that two corner points are optimal for the dual LP.
\end{proof}
Theorem~\ref{thm:degen} gives some intuition into when dual degeneracy happens; more importantly, it gives an easily testable condition. Note though that for a setting to be fragmentation-affected, we need to display two disconnected components in the rebalancing flows, but which has non-zero demand flowing between the components (otherwise, the underlying demand itself is disconnected!).
We next use this to validate our results based on simulation of synthetic and real-world settings.

\section{Simulations and Experiments}
\label{sec:sims}

We first use synthetic experiments to numerically verify our phase transition theorems; for this, we ran simulations on a simple four nodes network, which allows us to easily create instances which are fragmentation-resilient and fragmentation-affected. Next, we look at the NYC taxi data, and use Theorem~\ref{thm:degen} to classify the fraction of instances falling in each class -- surprisingly, we detect the fragmentation affected on a significant fraction of the data. We also perform several robustness checks to validate our findings.

\subsection{Synthetic Experiments}
\label{ssec:synthexp}

First, we experimentally study our characterization of the phase transition in the PoF in terms of dual degeneracy of $RC(\Lambda)$. Although the two-nodes network we used in Section \ref{sec:2node} already allowed us to exhibit both regimes, it has a very simple degenerate subset wherein $\Lambda_{ij}=\Lambda_{ji}$. In order to build more complex synthetic examples that exhibit both regimes, we consider a four node network depicted in Fig.~\ref{fig:4node}, comprising of $4$ nodes, with inter-node average travel-times as indicated~\footnote{We only indicate travel-times between nodes $(1,3),(1,4),(2,3)$ and $(2,4)$ as these are the only ones relevant to the circulation LP for the demands we consider.}. By varying the monopolist demands between the nodes, we use the network to verify the conditions for the two regimes given in Theorem~\ref{thm:degen}.

Recall that $\Lambda_i$ denotes the net {\em inflow} of demand into node $i$. To generate instances corresponding to the two regimes, we consider two different demand vectors, $\Lambda^1 = [2,3,-4,-1]$ and $\Lambda^2 = [2,3,-3,-2]$; note that this is sufficient for solving $RC(\Lambda)$, and we do not need to specify $\Lambda_{ij}$ completely. In both settings, the relevant inter-node travel-times are $\tau_{13}=1,\tau_{14}=3,\tau_{23}=2,\tau_{24}=5$ (the rest does not matter for solving $RC(\Lambda)$ for the given demands). These parameters, as well as the support of the corresponding optimal rebalancing flows are indicated by bold arrows in Figures~\ref{fig:4nodeL1} and \ref{fig:4nodeL3}. Note that in the former case, the rebalancing flows comprise a single connected component, and hence the solution is not dual degenerate (via Theorem~\ref{thm:degen}); in contrast, for $\Lambda^2$, optimal rebalancing flow comprises of $2$ connected components, and hence is dual-degenerate.

\begin{figure*}[!ht]
\centering
\subfigure[Fragmentation-Resilient Regime]{
\label{fig:4nodeL1}
    \begin{tikzpicture}[xscale=0.8,yscale=0.8]
    \draw [thick] (0,0) rectangle (1,1);
    \draw [thick] (0,3) rectangle (1,4);
    \draw [thick] (3,0) rectangle (4,1);
    \draw [thick] (3,3) rectangle (4,4);
    \node at (0.5,0.5) {2};
    \node at (0.5,3.5) {1};
    \node at (3.5,0.5) {4};
    \node at (3.5,3.5) {3};
    \draw [->] (1,0.5) -- (3,0.5);
    \draw [->, ultra thick] (1,1) -- (3,3);
    \draw [->, ultra thick] (1,3.5) -- (3,3.5);
    \draw [->,ultra thick] (1,3) -- (3,1);
    \node at (2,0.7) {$\tau_{24}=5$};
    \node at (2,3.2) {$\tau_{13}=1$};
    \node at (0.65,1.6) {$\tau_{23}=2$};
    \node at (0.65,2.4) {$\tau_{14}=3$};
    \draw [->,dashed] (0,0.5) -- (-1,0.5);
    \draw [->,dashed] (0,3.5) -- (-1,3.5);
    \draw [<-,dashed] (4,3.5) -- (5,3.5);
    \draw [<-,dashed] (4,0.5) -- (5,0.5);
    \node at (-0.5,4.2) {$\Lambda_1=2$};
    \node at (-0.5,-0.4) {$\Lambda_2=3$};
    \node at (4.5,4.2) {$\Lambda_3=-4$};
    \node at (4.5,-0.4) {$\Lambda_4=-1$};
    \end{tikzpicture}
}
\hspace{3.5cm}
\subfigure[Fragmentation-Affected Regime]{
\label{fig:4nodeL3}
	\begin{tikzpicture}[xscale=0.8,yscale=0.8]
	\draw [thick] (0,0) rectangle (1,1);
	\draw [thick] (0,3) rectangle (1,4);
	\draw [thick] (3,0) rectangle (4,1);
	\draw [thick] (3,3) rectangle (4,4);
	\node at (0.5,0.5) {2};
	\node at (0.5,3.5) {1};
	\node at (3.5,0.5) {3};
	\node at (3.5,3.5) {4};
	\draw [->] (1,0.5) -- (3,0.5);
	\draw [->, ultra thick] (1,1) -- (3,3);
	\draw [->] (1,3.5) -- (3,3.5);
	\draw [->, ultra thick] (1,3) -- (3,1);
	\node at (2,0.7) {$\tau_{24}=5$};
	\node at (2,3.2) {$\tau_{13}=1$};
	\node at (0.65,1.6) {$\tau_{23}=2$};
	\node at (0.65,2.4) {$\tau_{14}=3$};
    \draw [->,dashed] (0,0.5) -- (-1,0.5);
    \draw [->,dashed] (0,3.5) -- (-1,3.5);
    \draw [<-,dashed] (4,3.5) -- (5,3.5);
    \draw [<-,dashed] (4,0.5) -- (5,0.5);
	\node at (-0.5,4.2) {$\Lambda_1=2$};
	\node at (-0.5,-0.4) {$\Lambda_2=3$};
	\node at (4.5,4.2) {$\Lambda_3=-3$};
	\node at (4.5,-0.4) {$\Lambda_4=-2$};
	\end{tikzpicture}
}
\caption{Four node graph with nodes $\{1,2,3,4\}$ used for synthetic experiments. The dashed arrows indicate the direction of net flow of passengers at a node (i.e., entering or exiting; note $\Lambda_i$ corresponds to total demand flow exiting $i$). Fig.~\ref{fig:4nodeL1} has demand vector $\Lambda^1 = [2,3,-4,-1]$, while Fig.~\ref{fig:4nodeL3} has demand vector $\Lambda^2 = [2,3,-3,-2]$. The relevant edge costs $\tau_{ij}$ (given in the figures) are the same in both settings. The support of the optimal rebalancing flows are indicated by bold arrows in both settings. Note that in the first case, edges with positive rebalancing flow define a single connected component, while in the second, they define two disconnected components $\{1,3\}$ and $\{2,4\}$.}
\label{fig:4node}
\end{figure*}
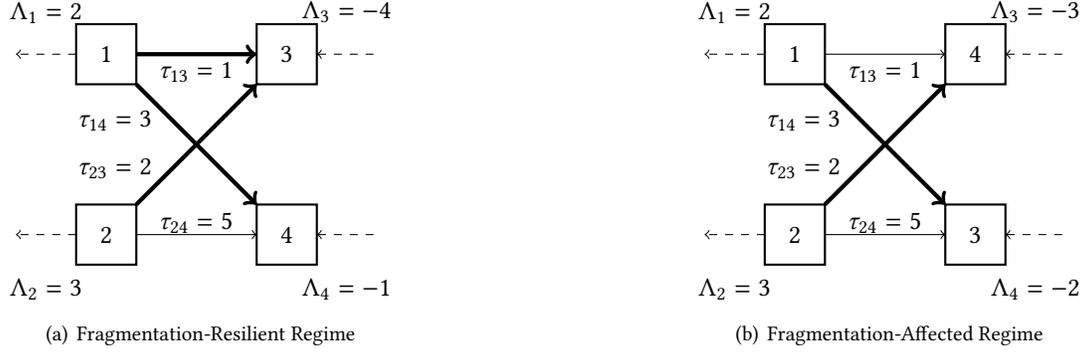
\begin{figure*}[!ht]
\centering
\subfigure[PoF in Fragmentation-Resilient Regime]{
\includegraphics[width=0.45\columnwidth]{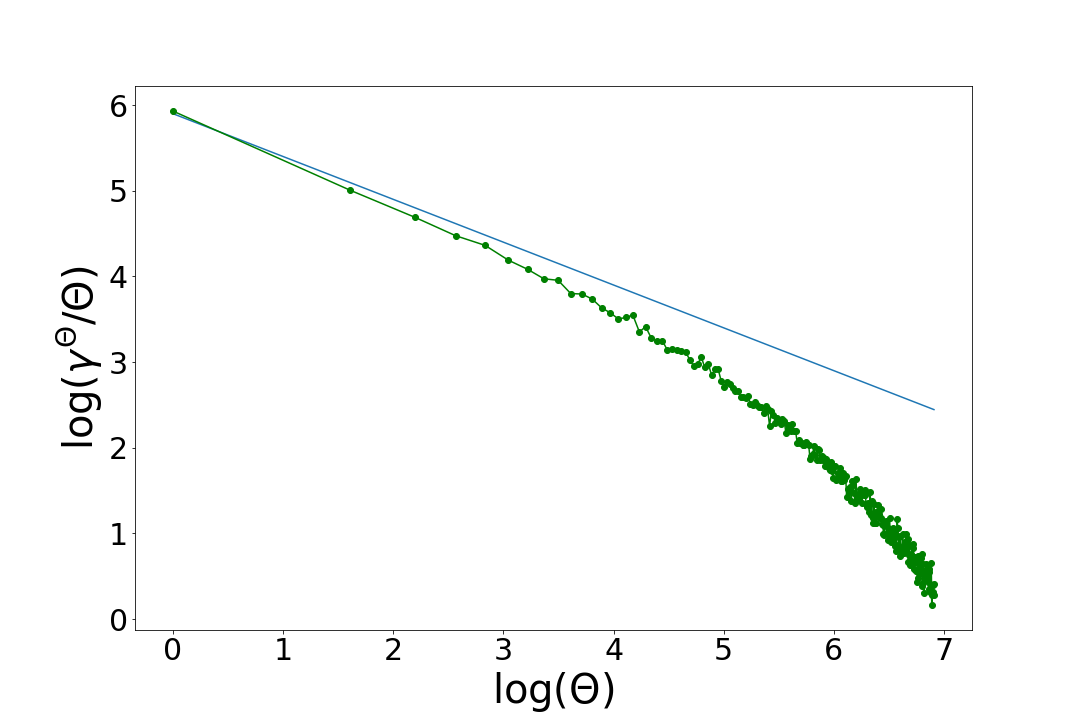}
\label{fig:fourexp}
}
\hspace{0.5cm}
\subfigure[PoF in Fragmentation-Affected Regime]{
\label{fig:foursqrt}
\includegraphics[width=0.45\columnwidth]{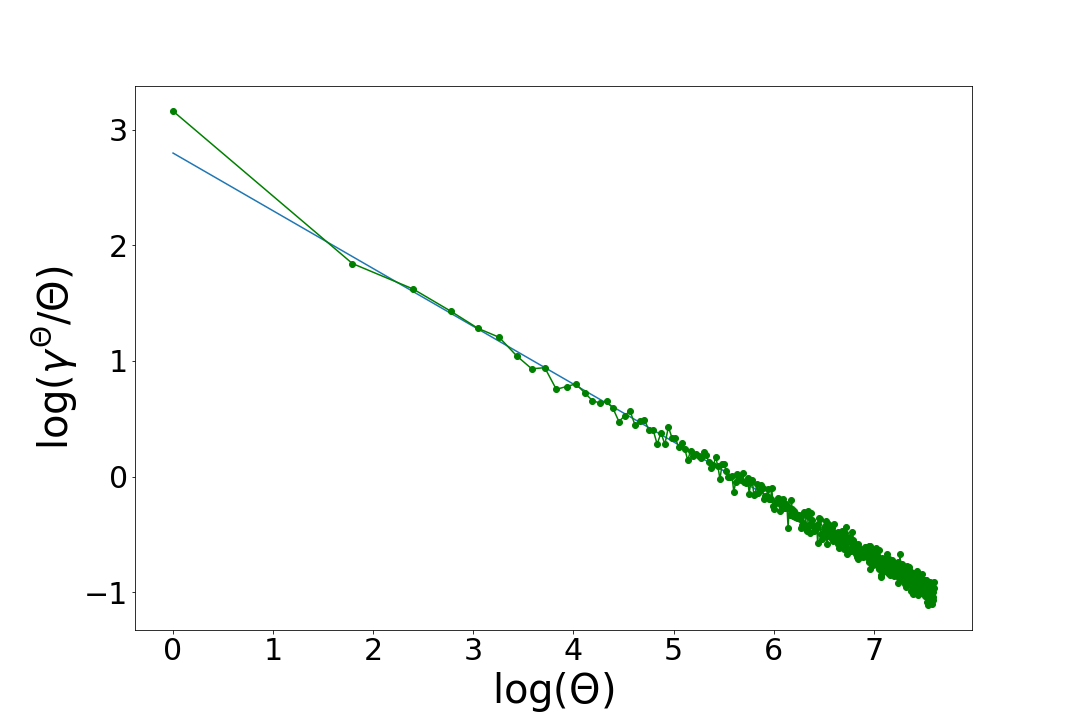}
}
\caption[PoF scaling  in 4-node network]
{Simulation showing Price of Fragmentation in 4-node network in Fig.~\ref{fig:4node}: The above plots show scaling behavior of the (normalized) increase in rebalancing costs $\lambda^{\theta}/\theta$ in the settings described above. In particular Fig.~\ref{fig:fourexp} corresponds to the fragmentation-resilient setting (Fig.~\ref{fig:4nodeL1}), while Fig.~\ref{fig:foursqrt} corresponds to the fragmentation-affected setting (Fig.~\ref{fig:4nodeL3}).}
\label{fig:preview}
\end{figure*}

In each case, to compute $\gamma^\theta/\theta$ for each $\theta$, we simulate $500$ instances of our random splitting process, and aggregate the random rebalancing costs. 

Fig.~\ref{fig:fourexp} and \ref{fig:foursqrt} show the scaling behavior of the PoF (in terms of $\log(\gamma^{\theta}/\theta)$ vs. $\log\theta$ in the two settings.
The green curve plots PoF estimates obtained from simulations, while the blue curve is a reference line of slope $-0.5$. Note that in the dual-degenerate demand case, the asymptotic regime indeed shows square-root growth, while in the other regime, the decay is sub-polynomial.

In addition, we also compare PoF under Binomial and Poisson original demands (and random splitting) under the fragmentation-resilient setting in Fig.~\ref{fig:BinPoiComp}. Here we see that the latter decays slower than the the former, as suggested by our theoretical results in Theorem~\ref{thm:expdecay}. 

\begin{figure}[H]
\begin{center}
\includegraphics[width=0.5\columnwidth]{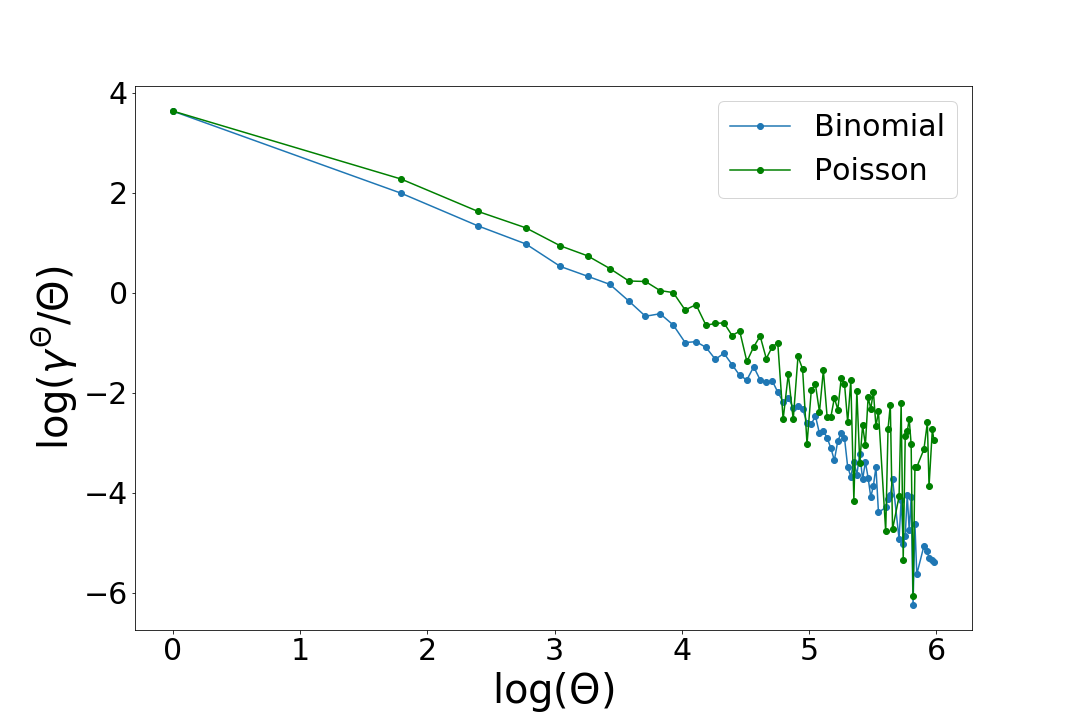}
\caption{Comparison of Binomial and Poisson demand splitting in the fragmentation-affected regime (Fig.~\ref{fig:fourexp}).}
\label{fig:BinPoiComp}
\end{center}
\end{figure}

\subsection{Experiments on NYC Dataset}
\label{ssec:nycexpts}

While we are able to experimentally demonstrate the two regimes in synthetic networks, the question remains as to whether both regimes are only witnessed in carefully engineered examples (in particular, since fragmentation-affected regimes occur under dual-degenerate demands, which may be unlikely in practice). 
To answer this question, we extend our experimental analysis to a simplified network model of New York City, with demand data from the New York City Taxi Dataset~\cite{donovan2014new} as a proxy for the actual system demand.\\ 
\noindent\textbf{Experimental setup}: Our experiments use taxi data recorded for the months of May and June 2016. The dataset has about $335,000$ trips per day; on an hourly basis, the number of trips ranges from $3,600$ to $20,000$ with an average number of around $14,000$ thousands trips per hour. 
We model the city as consisting of a finite number of pickup and dropoff \textit{stations} (20-80 in our experiments), where we assume each trip begins and ends. The stations are determined by taking all of the user demand for the full period of study, clustering the data (pickup and dropoff locations) into the desired number of stations (clusters), and picking the centroid for each cluster. For each time window of interest, the demands are then aggregated to the total demand per origin-destination station pair $\Lambda_{ij}$. For convenience, we use the distance $d_{ij}$ as a proxy for the travel-times $\tau_{ij}$ to parametrize edge-costs; we compute the Manhattan distance between the stations to get the distances $d_{ij}$.  The pickup-dropoff station locations are shown in Fig.~\ref{fig:cluster} in Appendix~\ref{appsec:plots}.

\begin{figure*}[!ht]
	\centering
	\subfigure[PoF on linear scale]{
		\includegraphics[width=0.45\columnwidth]{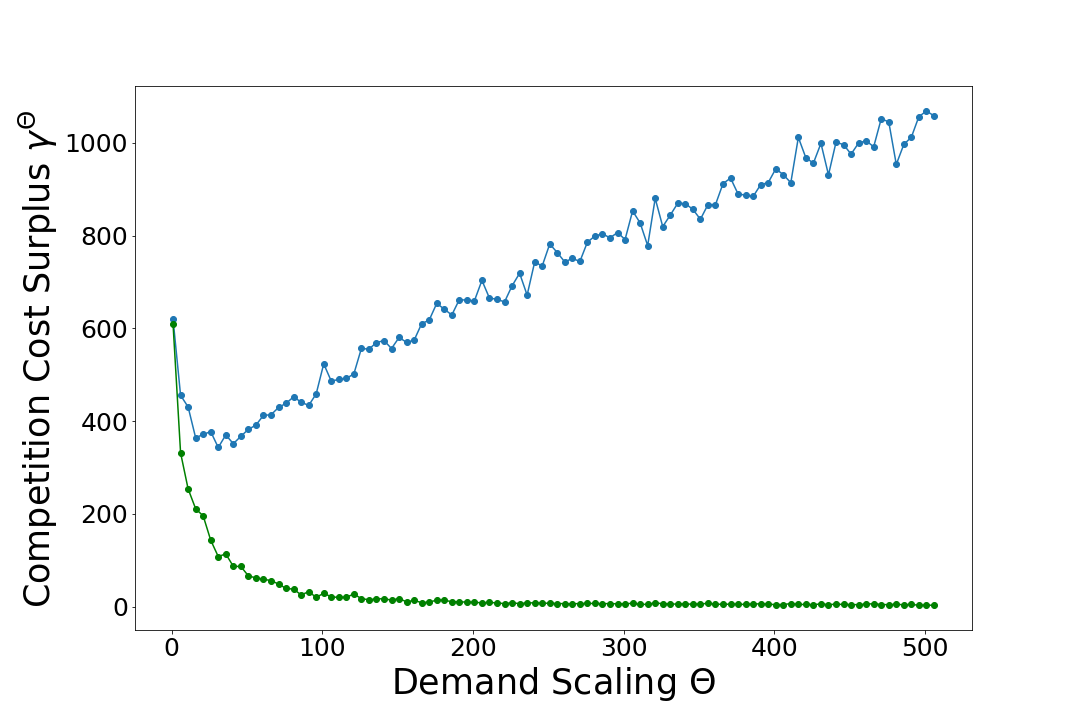}
		\label{fig:intro-full}
	}
	\hspace{1cm}
	\subfigure[PoF on log-log scale]{
		\label{fig:NYCregime}
		\includegraphics[width=0.45\columnwidth]{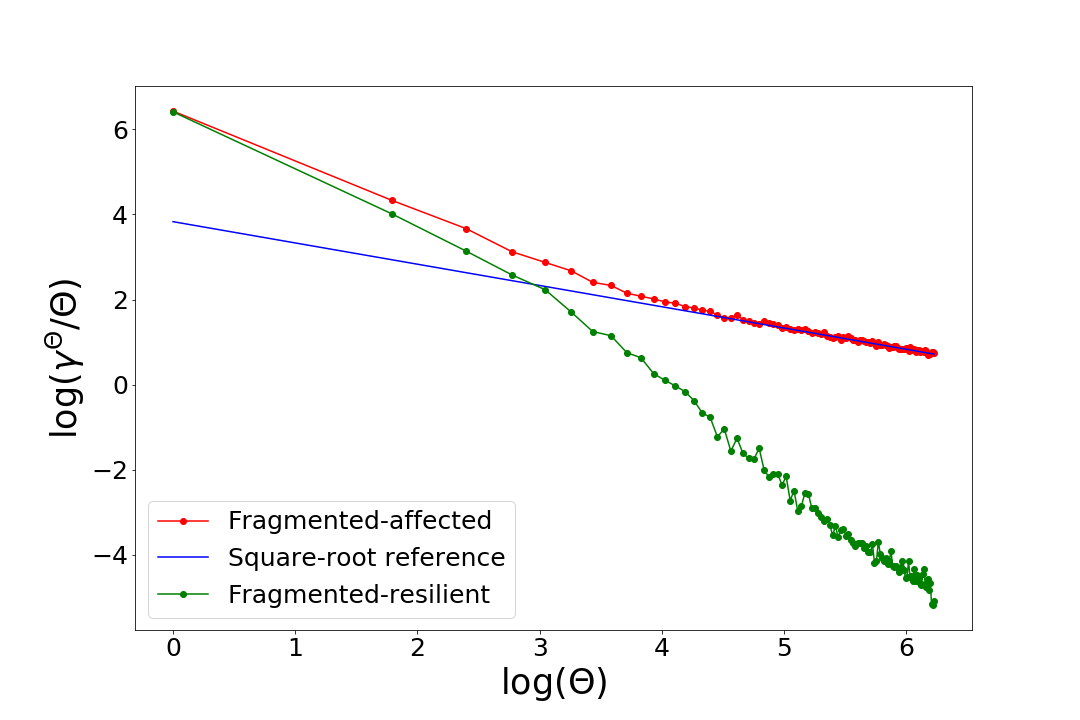}
	}
	\caption[PoF scaling  with NYC data]
	{Detailed plots for simulated data shown in Fig.~\ref{fig:intro}. Fig.~\ref{fig:intro-full} shows PoF scaling for a wider range of $\theta$, allowing us to observe the asymptotic behaviour; here the blue line corresponds to the fragmentation-affected regime, and the green to the fragmentation-resilient regime. Fig.~\ref{fig:NYCregime} displays the same data in log-log scale, allowing us to observe precisely the scaling.}
	\label{fig:NYCsimul}
\end{figure*}

\begin{figure*}[!th]
\centering
\subfigure[Effect of number stations on affected regime probability]{
\includegraphics[width=0.45\columnwidth]{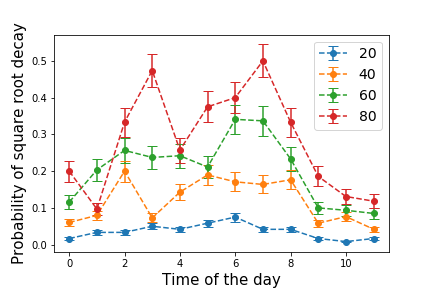}
\label{fig:NYCprob1}
}
\hspace{1cm}
\subfigure[Effect of time window on affected regime probability]{
\label{fig:changeTw}
\includegraphics[width=0.45\columnwidth]{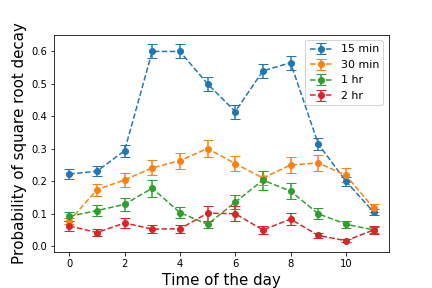}
}
\caption[Probability of affected regime]
{Probability of observing fragmentation-affected demand in NYC based on the NYC taxi dataset. Fig.~\ref{fig:NYCprob1} shows the influence of the number of stations, while Fig.~\ref{fig:changeTw} demonstrates the influence of the length of the time windows. Statistics plotted are based on aggregation of the data into time windows of length 2 hours.}
\label{fig:NYCproba}
\end{figure*}

\noindent\textbf{Experimental Findings}: In Figs.~\ref{fig:NYCregime} and ~\ref{fig:intro-full}, we show the simulation data from Fig.~\ref{fig:intro} in more detail; this provides a clear indication that both PoF regimes can be observed using demand distributions estimated from the NYC data.
In particular, the log-log plot in Fig.~\ref{fig:intro-full} clearly demonstrates the square-root growth and exponential decay. Note also that the two instances are very similar in terms of time (12pm vs. 1pm) and PoF for $\theta=1$; the difference really manifests under scaling.

Since we have established that the two regimes only depend on the structure of the underlying monopolist demand flows (and is testable via Theorem~\ref{thm:degen}), this allows us to quantify how often the fragmented-affected regime occurs in the NYC data. 
One issue in performing such an analysis is the choice of different estimation parameters, in particular, number of stations and size of aggregation time window. To this end, in Figs.~\ref{fig:NYCprob1} and \ref{fig:changeTw}, we plot the probability of observing the fragmentation-affected regime (along with the 95 percentile error bounds) for different time windows, and varying number of stations, using the data for the months of May and June 2016. We note that we compute the probability of dual-degeneracy conditional on having a connected demand graph (this turns out to be the case in almost all settings).

In Fig.~\ref{fig:NYCprob1}, we observe that the probability of observing dual-degenerate demands increases significantly as we increase the number of stations. This can be partially explained by the following heuristic argument: in a system with $N$ stations, the demand is scattered over $N^2$ edges. Consequently, the sparsity of the demand matrix increases quadratically, and when solving the LP, fewer edges are active, leading to more disconnected components (and therefore more fragmented-affected regimes, using Theorem~\ref{thm:degen}). Increasing the time window to aggregate more demands reduces the likelihood of this phenomenon, but does not eliminate it. In practice, ride-hailing systems currently pickup and drop-off passengers at a spatial granularity corresponding to each street corner, which is approximately 4000 stations for NYC; based on the trend in Fig.~\ref{fig:NYCprob1}, this may imply a very high probability of observing the fragmented affected regime.

One concern regarding the above discussion is that increasing number of stations also decreases the average inter-station travel cost; hence the overall effect on PoF is unclear. To this end, we also directly computed the PoF scaling behaviour of the system as a function of the number of stations, using a fixed demand (again corresponding to 1-2pm on the $10^{th}$ of May 2016). In Fig.~\ref{fig:changeStation}, we plot this scaling corresponding to $[20,40,60,80]$ stations.  
We observe that the setting appears to be fragmentation-resilient when aggregated using 20 and 40 stations, but fragmentation-affected for higher resolutions. 
We note though that this trend does not hold for all time slots, and other trends (e.g., being fragmentation-resilient with 20 and 60 stations, but not the others) can also be observed in the data; however, the overall trend (in Fig.\ref{fig:NYCprob1}) seems to suggest that greater resolutions lead to greater probability of observing fragmentation-affected regimes.

(We note also that the initial PoF, i.e., for $\theta=1$, increases as the spatial granularity increases. This is a numerical artifact caused by demands whose pick-up and drop-off locations are in the same cluster decreasing as the spatial granularity is refined -- recall that a station based model assumes no loss for demands that start and end at the same station.)

Another variable that impacts the aggregation of data, and thus the likelihood of degeneracy, is the length of the time window, as shown in Fig.~\ref{fig:changeTw} (with number of stations fixed to 40). One interpretation for this is that it is caused due to a discretization phenomenon -- shorter time windows may lead to sparser entries in the rebalancing LP, due to which there may be a higher probability of obtaining a dual-degenerate rebalancing flow. In any case, since the overall trend (higher probability of fragmentation-affected regimes when moving to smaller time slots) suggests that there indeed is a non-trivial probability of observing fragmentation-affectedness in the NYC data, no matter how we slice it.

\begin{figure}[!t]
\begin{center}
\includegraphics[width=0.5\columnwidth]{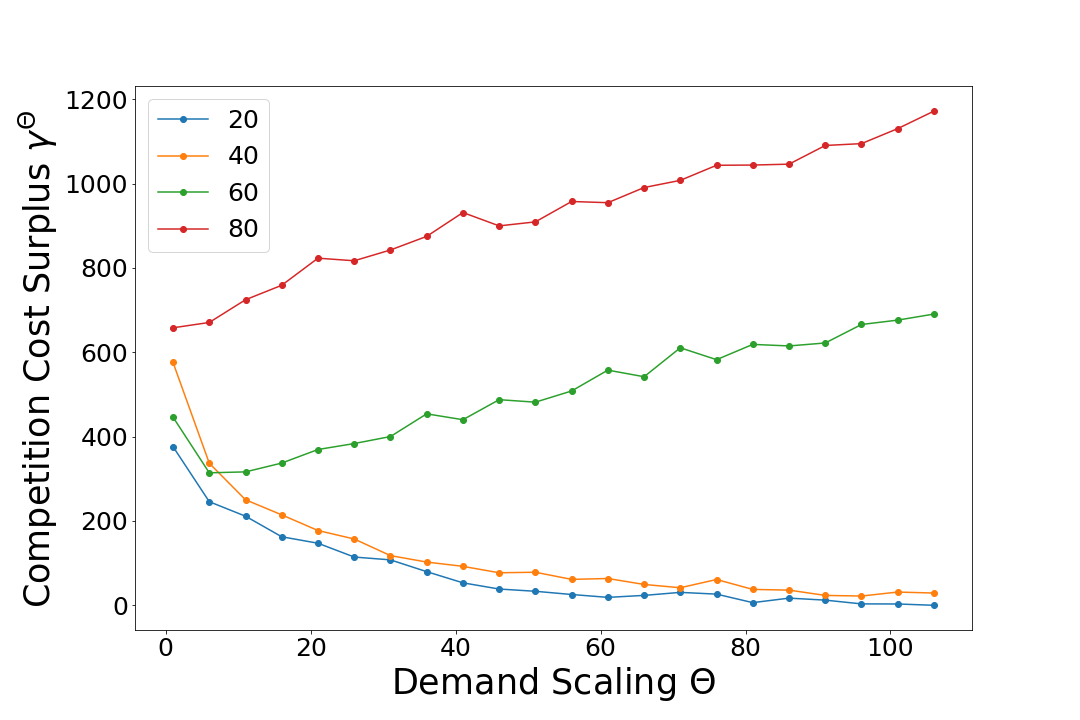}
\caption{Influence of the number of stations on PoF scaling (based on 1pm-2pm data for May 10).}
\label{fig:changeStation}
\end{center}
\end{figure}

Finally, we explore the effect of adding more firms to the ecosystem, as well as the effect of having inhomogeneous market shares; these plots are provided in Appendix~\ref{appsec:plots}. Adding more firms leads to the similar behaviour as with two firms (this confirms what we show in Theorem~\ref{thm:anycompany}). Inhomogeneous market shares however cause a significant increase in the PoF, and as predicted in Theorem~\ref{appthm:nonuniform}, exhibit a linear scaling (See Fig.~\ref{fig:inhomogen}). Furthermore the linear divergence gives PoF 10 to 100 times bigger compared to homogeneous shares when you scale. This shows that homogeneizing demand might be a key aspect for policy makers to try and control, if the firms indeed have spatially varying market shares.

\section{Insights and Policy Recommendations}
\label{sec:insights}

We now briefly discuss the insights we obtain from our theoretical and simulation results, and how they may be used to inform and shape public policy for regulating ride-hailing ecosystems.

\subsection{Insights}



At a high level, our results suggest that fragmentation in an MoD ecosystem incurs minimal additional cost as long as the underlying demands are not degenerate with respect to the dual of the minimum cost rebalancing LP. When the demand matrix is degenerate, however, we get the fragmentation-affected regime, wherein we have PoF$(\theta)=\Theta(\theta^{1/2})$; here, even though there is zero normalized loss between the duopoly and the monopoly in the limit, the fluctuations are such that the actual loss diverges when scaling $\theta$. Theorem~\ref{thm:degen} gives us a testable way of checking for this setting -- moreover, it provides an intuitive condition for this in terms of the existence of {\em locally balanced clusters} -- these correspond to subsets of the network wherein all nodes are relatively close by, and which have large balanced flows across the cut~\footnote{For example, consider the flow of traffic between San Francisco and Berkeley across the Bay Bridge -- there is a significant cost in traversing between the two, and moreover, a roughly balanced flow of people traveling in both directions.}. 

Numerically, we observe that this degeneracy condition is quite sensitive to the spatio-temporal discretization of the demand, i.e. the number of stations and time window. In particular, the probability of degeneracy appears to increase with the spatio-temporal granularity of the model. 
One important insight from this is that {\em aggregating pickup and drop off locations, and increasing the flexibility of pickup times} can have a large impact on the performance of a fragmented ride-hailing system. Practically speaking, cities do have policy/regulatory tools at their disposal that could help in this direction. For example, given that cities control curbside access, they can designate dedicated pickup and drop off locations (stations) that are spaced out in a manner that decreases the probability of degeneracy. Ride-hailing companies can also provide incentives to encourage customers to walk to stations that help with the rebalancing problem. On the temporal aggregation side, incentives can be provided to passengers who have flexibility in their pickup times. 

Another important insight of our results is regarding the importance of {\em multi-homing}, i.e., drivers who work for multiple firms simultaneously. This creates flexibility on the supply side -- in particular, it is easy to see that significant efficiencies can be obtained if some subset of the drivers can be summoned to serve demands on either platform (company). To see this, note that the average fluctuations in our model are on the order of $\sqrt{\theta}$; now if a constant fraction of drivers are multi-homing, there there is always sufficient flexible supply available to prevent the need for rebalancing. Multi-homing is a phenomenon that already occurs with many ride-hailing drivers in the US simultaneously being active on both the Lyft and Uber apps. The situation can be further improved by cities mandating that drivers (who are independent contractors) must be allowed to operate across multiple platforms and creating more awareness about the possibility of doing so. This is not allowed by ride-hailing companies in certain cities. Moroever, with the advent of driverless fleets, our work suggests that it may be desirable to require that MoD platforms continue to have a significant number of multi-homing drivers in their fleets.

Another insight from our analysis is that inhomogeneous market shares may lead to a greater chance of being in a fragmentation-affected regime, and moreover, can lead to much worse PoF (see Fig.\ref{fig:inhomogen}). Thus, operational settings where different firms target different neighbourhoods can lead to an increased probability of degeneracy. Unfortunately, this is a difficult problem to tackle via a policy/regulatory approach. One thought is that cities can impose a geographic equity requirement on providers that could help with this. An alternate more involved options is to set up a centralized market maker who controls a fraction of the flow (for example, a third-party app run by the city or local transit agency), and can use this to make demand-splits homogeneous over the city. Moreover, the market maker can directly target its actions to eliminate the degeneracy condition, if it has complete knowledge about all the service providers in the market, admittedly a challenging if not impossible task. 

Finally, we note that the conditions we obtain for fragmentation-affectedness are in a sense the very conditions under which \emph{mass public transit is efficient} -- people wanting to travel between some two distant locations, in large numbers, but moreover, in equal numbers in both directions. Thus, in a sense, by suggesting that fragmentation-affected settings can be remedied in some cases by having a better public transit infrastructure, our results provide more support for having good public transit.

\subsection{Extensions}
Our work presents a simple stylistic model for formally thinking about the PoF in ride-hailing systems. We now discuss some ideas for extending this initial work to more complex settings. One way to add more flexibility to the model is to extend it via a complimentary data-driven simulation model. Using our model as a starting point, and then developing more refined simulation based on the NYC taxi dataset. This would allow us to study the PoF assuming different firms use dynamic fleet management policies that react to demand surges. 

Next, our work focused on one metric, the cost of rebalancing, to quantify the societal impact of MoD systems. This has the advantage of being simple -- however, it raises the question of how our results change when we consider other controls which increase flexibility, in particular, reservation mechanisms and ride-pooling. 

Although our work studies the effect of fragmentation, we do so in a regime where we ignore strategic dynamics between the platforms. Knowing that there is a regime where the consequences of fragmentation do not vanish when scaling, a natural question is if we can exploit these competitive dynamics to affect a change in the PoF. As an extreme example, we can imagine a system where all demands are pooled to constitute a central server, that then allocates requests across companies. This relates to ideas of limited resource pooling in queueing networks~\cite{TJK12}, which are known to have dramatic impacts on the performance. 

Finally, any new policy that reduces rebalancing costs is meaningful only if it is economically viable. In essence, what we want is a system that can coordinate global demand and global fleet, but also sustains a competitive market. This can be modeled as a Stackelberg game framework, and ideas from existing models that incorporate queuing dynamics~\cite{hassin2003queue,lingenbrink2017optimal} may prove useful in our setting.

\section{Conclusions}
\label{sec:conclusions}

While competition between multiple service providers leads to better outcomes for consumers in general, the work presented here shows that demand fragmentation across multiple platforms in MoD markets can lead to significant losses in terms of operating costs. To this end, we show (both theoretically and experimentally) that the system admits a phase transition in terms of the {\em Price of Fragmentation} (PoF), a metric we define which quantifies the increase in operational costs in a duopoly vs. a monopoly under a stochastic demand splitting model. More precisely, under any demand-splitting process, we show that the system will admit two dramatically different behaviors when the demand is scaled up to infinity; one with an asymptotic loss of zero, and another with an infinite asymptotic loss. These results on the phase transition remain valid even in situations where the market shares are not homogeneous over the city and there are more than two firms competing.

This article is an initial treatment of a subject that is of considerable interest with the increasing importance of MoD systems and their potential for being part of a sustainable transportation ecosystem for urban areas in the future. Furthermore, the losses that we observe should be even more pronounced in MoD systems with ridepooling, where multiple demands are served by the same vehicle and demand aggregation is critical for efficiency. Given that merging all service operators to create a monopoly is not a viable solution (for many reasons), there is a need for developing new mechanisms that can mitigate this efficiency loss. This work aims to motivate future research in that direction.


\bibliographystyle{ACM-Reference-Format}
\bibliography{bibliography} 


\begin{thebibliography}{24}


\ifx \showCODEN    \undefined \def \showCODEN     #1{\unskip}     \fi
\ifx \showDOI      \undefined \def \showDOI       #1{#1}\fi
\ifx \showISBNx    \undefined \def \showISBNx     #1{\unskip}     \fi
\ifx \showISBNxiii \undefined \def \showISBNxiii  #1{\unskip}     \fi
\ifx \showISSN     \undefined \def \showISSN      #1{\unskip}     \fi
\ifx \showLCCN     \undefined \def \showLCCN      #1{\unskip}     \fi
\ifx \shownote     \undefined \def \shownote      #1{#1}          \fi
\ifx \showarticletitle \undefined \def \showarticletitle #1{#1}   \fi
\ifx \showURL      \undefined \def \showURL       {\relax}        \fi
\providecommand\bibfield[2]{#2}
\providecommand\bibinfo[2]{#2}
\providecommand\natexlab[1]{#1}
\providecommand\showeprint[2][]{arXiv:#2}

\bibitem[\protect\citeauthoryear{??}{don}{2017}]%
        {donovan2014new}
 \bibinfo{year}{2017}\natexlab{}.
\newblock \bibinfo{title}{NYC Taxi Data}.
\newblock   (\bibinfo{year}{2017}).
\newblock
\showURL{%
\url{http://www.nyc.gov/html/tlc/html/about/trip_record_data.shtml}}


\bibitem[\protect\citeauthoryear{Abolhassani, Bateni, Hajiaghayi, Mahini, and
  Sawant}{Abolhassani et~al\mbox{.}}{2014}]%
        {ABH14}
\bibfield{author}{\bibinfo{person}{Melika Abolhassani},
  \bibinfo{person}{Mohammad~Hossein Bateni}, \bibinfo{person}{MohammadTaghi
  Hajiaghayi}, \bibinfo{person}{Hamid Mahini}, {and} \bibinfo{person}{Anshul
  Sawant}.} \bibinfo{year}{2014}\natexlab{}.
\newblock \showarticletitle{Network cournot competition}. In
  \bibinfo{booktitle}{{\em International Conference on Web and Internet
  Economics}}. Springer, \bibinfo{pages}{15--29}.
\newblock


\bibitem[\protect\citeauthoryear{Alonso-Mora, Samaranayake, Wallar, Frazzoli,
  and Rus}{Alonso-Mora et~al\mbox{.}}{2017}]%
        {SFR17}
\bibfield{author}{\bibinfo{person}{Javier Alonso-Mora},
  \bibinfo{person}{Samitha Samaranayake}, \bibinfo{person}{Alex Wallar},
  \bibinfo{person}{Emilio Frazzoli}, {and} \bibinfo{person}{Daniela Rus}.}
  \bibinfo{year}{2017}\natexlab{}.
\newblock \showarticletitle{On-demand high-capacity ride-sharing via dynamic
  trip-vehicle assignment}.
\newblock \bibinfo{journal}{{\em Proceedings of the National Academy of
  Sciences\/}} (\bibinfo{year}{2017}), \bibinfo{pages}{201611675}.
\newblock


\bibitem[\protect\citeauthoryear{Anshelevich and Sekar}{Anshelevich and
  Sekar}{2015}]%
        {AS15}
\bibfield{author}{\bibinfo{person}{Elliot Anshelevich} {and}
  \bibinfo{person}{Shreyas Sekar}.} \bibinfo{year}{2015}\natexlab{}.
\newblock \showarticletitle{Price Competition in Networked Markets: How do
  monopolies impact social welfare?}. In \bibinfo{booktitle}{{\em International
  Conference on Web and Internet Economics}}. Springer,
  \bibinfo{pages}{16--30}.
\newblock


\bibitem[\protect\citeauthoryear{Banerjee, Freund, and Lykouris}{Banerjee
  et~al\mbox{.}}{2017}]%
        {BFL17}
\bibfield{author}{\bibinfo{person}{Siddhartha Banerjee},
  \bibinfo{person}{Daniel Freund}, {and} \bibinfo{person}{Thodoris Lykouris}.}
  \bibinfo{year}{2017}\natexlab{}.
\newblock \showarticletitle{Pricing and Optimization in Shared Vehicle Systems:
  An Approximation Framework}. In \bibinfo{booktitle}{{\em Proceedings of the
  2017 ACM Conference on Economics and Computation}}. ACM,
  \bibinfo{pages}{517--517}.
\newblock


\bibitem[\protect\citeauthoryear{Banerjee, Johari, and Riquelme}{Banerjee
  et~al\mbox{.}}{2015}]%
        {banerjee2015pricing}
\bibfield{author}{\bibinfo{person}{Siddhartha Banerjee},
  \bibinfo{person}{Ramesh Johari}, {and} \bibinfo{person}{Carlos Riquelme}.}
  \bibinfo{year}{2015}\natexlab{}.
\newblock \showarticletitle{Pricing in ride-sharing platforms: A
  queueing-theoretic approach}. In \bibinfo{booktitle}{{\em Proceedings of the
  Sixteenth ACM Conference on Economics and Computation}}. ACM,
  \bibinfo{pages}{639--639}.
\newblock


\bibitem[\protect\citeauthoryear{Bimpikis, Ehsani, and Ilkilic}{Bimpikis
  et~al\mbox{.}}{2014}]%
        {BEI14}
\bibfield{author}{\bibinfo{person}{Kostas Bimpikis}, \bibinfo{person}{Shayan
  Ehsani}, {and} \bibinfo{person}{Rahmi Ilkilic}.}
  \bibinfo{year}{2014}\natexlab{}.
\newblock \showarticletitle{Cournot competition in networked markets.}. In
  \bibinfo{booktitle}{{\em EC}}. \bibinfo{pages}{733}.
\newblock


\bibitem[\protect\citeauthoryear{Braverman, Dai, Liu, and Ying}{Braverman
  et~al\mbox{.}}{2016}]%
        {braverman2016empty}
\bibfield{author}{\bibinfo{person}{Anton Braverman}, \bibinfo{person}{JG Dai},
  \bibinfo{person}{Xin Liu}, {and} \bibinfo{person}{Lei Ying}.}
  \bibinfo{year}{2016}\natexlab{}.
\newblock \showarticletitle{Empty-car routing in ridesharing systems}.
\newblock \bibinfo{journal}{{\em arXiv preprint arXiv:1609.07219\/}}
  (\bibinfo{year}{2016}).
\newblock


\bibitem[\protect\citeauthoryear{Cai, Bose, and Wierman}{Cai
  et~al\mbox{.}}{2017}]%
        {CBW17}
\bibfield{author}{\bibinfo{person}{Desmond Cai}, \bibinfo{person}{Subhonmesh
  Bose}, {and} \bibinfo{person}{Adam Wierman}.}
  \bibinfo{year}{2017}\natexlab{}.
\newblock \showarticletitle{On the role of a market maker in networked cournot
  competition}.
\newblock \bibinfo{journal}{{\em arXiv preprint arXiv:1701.08896\/}}
  (\bibinfo{year}{2017}).
\newblock


\bibitem[\protect\citeauthoryear{George}{George}{2012}]%
        {george2012stochastic}
\bibfield{author}{\bibinfo{person}{David~K George}.}
  \bibinfo{year}{2012}\natexlab{}.
\newblock {\em \bibinfo{title}{Stochastic Modeling and Decentralized Control
  Policies for Large-Scale Vehicle Sharing Systems via Closed Queueing
  Networks}}.
\newblock \bibinfo{thesistype}{Ph.D. Dissertation}. \bibinfo{school}{The Ohio
  State University}.
\newblock


\bibitem[\protect\citeauthoryear{Hassin and Haviv}{Hassin and Haviv}{2003}]%
        {hassin2003queue}
\bibfield{author}{\bibinfo{person}{Refael Hassin} {and} \bibinfo{person}{Moshe
  Haviv}.} \bibinfo{year}{2003}\natexlab{}.
\newblock \bibinfo{booktitle}{{\em To queue or not to queue: Equilibrium
  behavior in queueing systems}}. Vol.~\bibinfo{volume}{59}.
\newblock \bibinfo{publisher}{Springer Science \& Business Media}.
\newblock


\bibitem[\protect\citeauthoryear{Lingenbrink and Iyer}{Lingenbrink and
  Iyer}{2017}]%
        {lingenbrink2017optimal}
\bibfield{author}{\bibinfo{person}{David Lingenbrink} {and}
  \bibinfo{person}{Krishnamurthy Iyer}.} \bibinfo{year}{2017}\natexlab{}.
\newblock \showarticletitle{Optimal Signaling Mechanisms in Unobservable Queues
  with Strategic Customers}.
\newblock  (\bibinfo{year}{2017}).
\newblock


\bibitem[\protect\citeauthoryear{of~Economic and Affairs}{of~Economic and
  Affairs}{2015}]%
        {UN14}
\bibfield{author}{\bibinfo{person}{United Nations~Department of Economic} {and}
  \bibinfo{person}{Social Affairs}.} \bibinfo{year}{2015}\natexlab{}.
\newblock \bibinfo{booktitle}{{\em World Urbanization Prospects: The 2014
  Revision}}.
\newblock \bibinfo{type}{{T}echnical {R}eport}. \bibinfo{institution}{United
  Nations}.
\newblock


\bibitem[\protect\citeauthoryear{Ozkan and Ward}{Ozkan and Ward}{2016}]%
        {ozkan2016dynamic}
\bibfield{author}{\bibinfo{person}{Erhun Ozkan} {and} \bibinfo{person}{Amy~R
  Ward}.} \bibinfo{year}{2016}\natexlab{}.
\newblock \showarticletitle{Dynamic Matching for Real-Time Ridesharing}.
\newblock  (\bibinfo{year}{2016}).
\newblock


\bibitem[\protect\citeauthoryear{Pavone, Smith, Frazzoli, and Rus}{Pavone
  et~al\mbox{.}}{2012}]%
        {PSFR12}
\bibfield{author}{\bibinfo{person}{Marco Pavone}, \bibinfo{person}{Stephen~L
  Smith}, \bibinfo{person}{Emilio Frazzoli}, {and} \bibinfo{person}{Daniela
  Rus}.} \bibinfo{year}{2012}\natexlab{}.
\newblock \showarticletitle{Robotic load balancing for mobility-on-demand
  systems}.
\newblock \bibinfo{journal}{{\em The International Journal of Robotics
  Research\/}} \bibinfo{volume}{31}, \bibinfo{number}{7}
  (\bibinfo{year}{2012}), \bibinfo{pages}{839--854}.
\newblock


\bibitem[\protect\citeauthoryear{Rochet and Tirole}{Rochet and Tirole}{2006}]%
        {rochet2006two}
\bibfield{author}{\bibinfo{person}{Jean-Charles Rochet} {and}
  \bibinfo{person}{Jean Tirole}.} \bibinfo{year}{2006}\natexlab{}.
\newblock \showarticletitle{Two-sided markets: a progress report}.
\newblock \bibinfo{journal}{{\em The RAND Journal of Economics\/}}
  \bibinfo{volume}{37}, \bibinfo{number}{3} (\bibinfo{year}{2006}),
  \bibinfo{pages}{645--667}.
\newblock


\bibitem[\protect\citeauthoryear{Rysman}{Rysman}{2009}]%
        {rysman2009economics}
\bibfield{author}{\bibinfo{person}{Marc Rysman}.}
  \bibinfo{year}{2009}\natexlab{}.
\newblock \showarticletitle{The economics of two-sided markets}.
\newblock \bibinfo{journal}{{\em The Journal of Economic Perspectives\/}}
  (\bibinfo{year}{2009}), \bibinfo{pages}{125--143}.
\newblock


\bibitem[\protect\citeauthoryear{Santi, Resta, Szell, Sobolevsky, Strogatz, and
  Ratti}{Santi et~al\mbox{.}}{2014}]%
        {SR14}
\bibfield{author}{\bibinfo{person}{Paolo Santi}, \bibinfo{person}{Giovanni
  Resta}, \bibinfo{person}{Michael Szell}, \bibinfo{person}{Stanislav
  Sobolevsky}, \bibinfo{person}{Steven~H Strogatz}, {and}
  \bibinfo{person}{Carlo Ratti}.} \bibinfo{year}{2014}\natexlab{}.
\newblock \showarticletitle{Quantifying the benefits of vehicle pooling with
  shareability networks}.
\newblock \bibinfo{journal}{{\em Proceedings of the National Academy of
  Sciences\/}} \bibinfo{volume}{111}, \bibinfo{number}{37}
  (\bibinfo{year}{2014}), \bibinfo{pages}{13290--13294}.
\newblock


\bibitem[\protect\citeauthoryear{Schrank, Schrank, Lomax, and Bak}{Schrank
  et~al\mbox{.}}{2015}]%
        {TTI15}
\bibfield{author}{\bibinfo{person}{David Schrank}, \bibinfo{person}{David
  Schrank}, \bibinfo{person}{Tim Lomax}, {and} \bibinfo{person}{Jim Bak}.}
  \bibinfo{year}{2015}\natexlab{}.
\newblock \bibinfo{booktitle}{{\em 2015 Urban Mobility Scorecard}}.
\newblock \bibinfo{type}{{T}echnical {R}eport}. \bibinfo{institution}{The Texas
  A and M Transportation Institute and INRIX}.
\newblock


\bibitem[\protect\citeauthoryear{Spieser, Samaranayake, Gruel, and
  Frazzoli}{Spieser et~al\mbox{.}}{2016}]%
        {SGF15}
\bibfield{author}{\bibinfo{person}{Kevin Spieser}, \bibinfo{person}{Samitha
  Samaranayake}, \bibinfo{person}{Wolfgang Gruel}, {and}
  \bibinfo{person}{Emilio Frazzoli}.} \bibinfo{year}{2016}\natexlab{}.
\newblock \showarticletitle{Shared-vehicle mobility-on-demand systems: a fleet
  operator's guide to rebalancing empty vehicles}. In \bibinfo{booktitle}{{\em
  Transportation Research Board 95th Annual Meeting}}.
\newblock


\bibitem[\protect\citeauthoryear{Treleaven, Pavone, and Frazzoli}{Treleaven
  et~al\mbox{.}}{2013}]%
        {TPF13}
\bibfield{author}{\bibinfo{person}{Kyle Treleaven}, \bibinfo{person}{Marco
  Pavone}, {and} \bibinfo{person}{Emilio Frazzoli}.}
  \bibinfo{year}{2013}\natexlab{}.
\newblock \showarticletitle{Asymptotically optimal algorithms for one-to-one
  pickup and delivery problems with applications to transportation systems}.
\newblock \bibinfo{journal}{{\it IEEE Trans. Automat. Control}}
  \bibinfo{volume}{58}, \bibinfo{number}{9} (\bibinfo{year}{2013}),
  \bibinfo{pages}{2261--2276}.
\newblock


\bibitem[\protect\citeauthoryear{Tsitsiklis, Xu, et~al\mbox{.}}{Tsitsiklis
  et~al\mbox{.}}{2012}]%
        {TJK12}
\bibfield{author}{\bibinfo{person}{John~N Tsitsiklis}, \bibinfo{person}{Kuang
  Xu}, {et~al\mbox{.}}} \bibinfo{year}{2012}\natexlab{}.
\newblock \showarticletitle{On the power of (even a little) resource pooling}.
\newblock \bibinfo{journal}{{\em Stochastic Systems\/}} \bibinfo{volume}{2},
  \bibinfo{number}{1} (\bibinfo{year}{2012}), \bibinfo{pages}{1--66}.
\newblock


\bibitem[\protect\citeauthoryear{Weyl}{Weyl}{2010}]%
        {weyl2010price}
\bibfield{author}{\bibinfo{person}{E~Glen Weyl}.}
  \bibinfo{year}{2010}\natexlab{}.
\newblock \showarticletitle{A price theory of multi-sided platforms}.
\newblock \bibinfo{journal}{{\em The American Economic Review\/}}
  (\bibinfo{year}{2010}), \bibinfo{pages}{1642--1672}.
\newblock


\bibitem[\protect\citeauthoryear{Zhang and Pavone}{Zhang and Pavone}{2016}]%
        {ZP14}
\bibfield{author}{\bibinfo{person}{Rick Zhang} {and} \bibinfo{person}{Marco
  Pavone}.} \bibinfo{year}{2016}\natexlab{}.
\newblock \showarticletitle{Control of robotic mobility-on-demand systems: a
  queueing-theoretical perspective}.
\newblock \bibinfo{journal}{{\em The International Journal of Robotics
  Research\/}} \bibinfo{volume}{35}, \bibinfo{number}{1-3}
  (\bibinfo{year}{2016}), \bibinfo{pages}{186--203}.
\newblock


\end{thebibliography}

\newpage
\appendix

\section{Proofs of Results}
\label{appsec:proofs}

In this appendix, we provide the complete proofs of our results. For convenience, we first restate the relevant results before presenting the proof.
First, we present an auxiliary lemma, where we note that the dual LP as defined in \ref{eq:dual} is degenerate up to translations: 
\begin{proposition}
	\label{prop:degen}
	For any $\Lambda$, given any $\beta$ a feasible solution to the dual LP, then $\forall c \in \mathbb{R}$, $\beta+c\mathbf{1} = \{\beta_i+c\}$ is also a feasible point with the same objective value.
\end{proposition}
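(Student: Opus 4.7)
The plan is to verify the two claims of the proposition separately: first that $\beta + c\mathbf{1}$ remains dual-feasible, and second that its objective value coincides with that of $\beta$. Both should follow from elementary observations, so I expect no real obstacle beyond making the structural argument about net flows explicit.

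For feasibility, I would simply substitute into the dual constraint \eqref{eq:dual}. Since $(\beta_i + c) - (\beta_j + c) = \beta_i - \beta_j \leq \tau_{ij}$ for every $(i,j)\in E$, the translated vector satisfies exactly the same inequalities as $\beta$. In other words, the dual polytope is invariant under translations along the all-ones direction, which is the structural reason the LP is degenerate in this sense.

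For the objective value, the difference between the two is
\[
\sum_i (\beta_i + c)\Lambda_i - \sum_i \beta_i \Lambda_i = c \sum_i \Lambda_i,
\]
so it suffices to show $\sum_i \Lambda_i = 0$. This follows from the definition $\Lambda_i = \sum_{j} (\Lambda_{ji} - \Lambda_{ij})$: when summed over all nodes $i$, each ordered pair $(a,b)$ contributes $\Lambda_{ab}$ positively (as the $\Lambda_{ji}$ term for $i=b$, $j=a$) and negatively (as the $\Lambda_{ij}$ term for $i=a$, $j=b$), so the total telescopes to zero. This is the standard fact that net inflows over any closed system sum to zero, and it is exactly why the dual constraints only fix $\beta$ up to an additive constant. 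Combining the two observations yields the proposition.
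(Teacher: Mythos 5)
Your proof is correct and follows essentially the same route as the paper's: feasibility because the constant cancels in each dual constraint, and objective invariance because $\sum_i \Lambda_i = 0$ by the telescoping of net flows. The only difference is that you spell out the cancellation in $\sum_i \Lambda_i$ more explicitly, which the paper leaves as a one-line identity.
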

\begin{proof}
	For any $i\in V$, we have $(\beta_{i}+c) - (\beta_{j}+c) = \beta_i-\beta_j \leq d_{ij}$ since $\beta$ is feasible -- hence, we have $\beta+c\mathbf{1}$ is also feasible. Furthermore, $(\beta+c\mathbf{1})^{\intercal}\Lambda = \beta^{\intercal}\Lambda + c\mathbf{1}^{\intercal}\Lambda = \beta^{\intercal}\Lambda$, as we have:
	\begin{align*}
	\sum_i \Lambda_i &= \sum_{(i,j)} \Lambda_{ji} - \sum_{(i,j)} \Lambda_{ij}= 0
	\end{align*}
	Thus both have the same objective value.
\end{proof}
As a consequence, we can add non-negativity constraints $\alpha_i\geq 0\,\forall\,i$ to \eqref{eq:dual} without affecting its value.
Now we turn to the results in Section~\ref{sec:2node}.

\begin{proposition}[Ref. Proposition~\ref{prop:2node}]
\label{appprop:2node}
	For sufficiently large $\theta$, $\exists\,\, A_1,A_2\in \mathbb{R}$ such that:
	\begin{itemize}
		\item  $\lambda = \mu \; \Rightarrow \; \gamma^{\theta} = A_1\theta^{1/2}$
		\item  $\lambda \neq \mu \; \Rightarrow \gamma^{\theta} = A_2\theta^{-1/2}e^{-\frac{\rho(\lambda-\mu)^{2}}{2(1-\rho)(\lambda+\mu)}.\theta} + o\left(\theta^{-1/2}e^{-\frac{\rho(\lambda-\mu)^{2}}{2(1-\rho)(\lambda+\mu)}.\theta}\right)$
	\end{itemize}
\end{proposition}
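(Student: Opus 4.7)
The plan is to reduce $\gamma^\theta$ to the expectation of a simple piecewise-linear functional of a single Gaussian, and then analyze that expectation in each regime. Using the closed form $RC(\lambda,\mu) = \frac{(\tau-1)(\mu-\lambda) + (\tau+1)|\mu-\lambda|}{2}$ derived just before the proposition, the linear part of $RC$ telescopes when we sum firm $a$, firm $b$, and the monopolist, leaving $\gamma^\theta = \tfrac{\tau+1}{2}\,\mathbb{E}\!\left[|U| + |W-U| - |W|\right]$, where $U \triangleq Y - X$ is the net inflow of firm $a$ at station $2$ and $W \triangleq \theta(\mu-\lambda)$ is the net monopolist inflow. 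By the Gaussian approximation to the binomial splits, $U \sim \mathcal{N}(\rho W, s^2)$ with $s^2 = \theta\rho(1-\rho)(\lambda+\mu)$.

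In the balanced case ($\lambda=\mu$, so $W=0$), the integrand collapses to $2|U|$ with $U$ centered, so $\gamma^\theta = (\tau+1)\,\mathbb{E}|U| = (\tau+1)\sqrt{2/\pi}\,s$, yielding $A_1\theta^{1/2}$ directly. For the unbalanced case, WLOG take $\mu > \lambda$ so $W>0$, and observe that on the interval $U \in [0,W]$ one has $|U|+|W-U| = W$, contributing nothing. Splitting the remaining two sides gives the clean identity
\begin{equation*}
\gamma^\theta \;=\; (\tau+1)\Bigl(\mathbb{E}[(U-W)^+] + \mathbb{E}[U^-]\Bigr),
\end{equation*}
i.e., the PoF comes precisely from events where firm $a$'s demand inverts the rebalancing direction relative to the monopolist (this matches the intuitive picture given after the proposition statement).

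The heart of the proof is then asymptotics of $\mathbb{E}[Z^+]$ for $Z \sim \mathcal{N}(m,s^2)$ with $m<0$ and $|m|/s\to\infty$. Writing $\mathbb{E}[Z^+] = m\,\Phi(m/s) + s\,\phi(m/s)$ and inserting the two-term Mills ratio expansion $1-\Phi(x) = \phi(x)\bigl(x^{-1} - x^{-3} + O(x^{-5})\bigr)$ for large $x$ produces the cancellation of the leading $\mp s\phi(|m|/s)$ terms, leaving $\mathbb{E}[Z^+] \sim s^3\phi(|m|/s)/m^2$. Applied to $(U-W)^+$ with $|m_1|=(1-\rho)W$ and to $U^-$ with $|m_2|=\rho W$, the exponents come out to $-(1-\rho)\theta(\mu-\lambda)^2/(2\rho(\lambda+\mu))$ and $-\rho\theta(\mu-\lambda)^2/(2(1-\rho)(\lambda+\mu))$ respectively. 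Since $\rho\le 1/2$, the second is the slower-decaying one, so $\mathbb{E}[U^-]$ dominates and gives the claimed exponent $-\rho(\lambda-\mu)^2/(2(1-\rho)(\lambda+\mu))\cdot\theta$; tracking the prefactor $s^3/m_2^2$ yields the $\theta^{-1/2}$ scaling and identifies $A_2$.

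The main obstacle is the second-order Mills-ratio expansion: the naive leading-order estimates cancel exactly, so one must expand one term further to see the true rate, and also verify that the subleading $O(|m|^{-4})$ remainders are absorbed into the $o(\cdot)$ term. A secondary subtlety is justifying that the Gaussian approximation to the binomial does not alter the exponent -- this is handled in the general theorems by using only sub-Gaussian tail bounds (so the second bullet of Proposition~\ref{prop:2node} is really a statement about the Gaussian model, which the proposition's preamble explicitly assumes). Identifying which of the two tail events dominates (via $\rho \le 1-\rho$) is routine but essential to matching the stated exponent.
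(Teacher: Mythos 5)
Your proposal is correct and follows essentially the same route as the paper's proof: after the linear part of $RC$ telescopes, everything reduces to the tail expectations of a Gaussian, and the decisive step is the second-order Mills-ratio/erf expansion that survives the cancellation of the leading $s\phi(|m|/s)$ terms, with the $\rho \le 1/2$ comparison selecting the dominant exponent. Your rewriting of the integrand as $2(U-W)^+ + 2U^-$ is just the folded-normal identity $\mathbb{E}|Z| - |\mathbb{E}Z| = 2\,\mathbb{E}[Z^+]$ in disguise, so the two arguments are equivalent.
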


\begin{proof}
	First, when $\lambda \neq \mu$, we have:
	\begin{align*}
	2\gamma^{\theta} &= \theta(a-1)(\mu-\lambda)+ (a+1).\mathbb{E}[(|X-Y| + |(\mu - \lambda) - (X-Y)|] - \theta \left((a-1).(\mu - \lambda) + (a+1).|\mu - \lambda|\right)\\
	&=(a+1).\mathbb{E}[(|X-Y| + |(\mu - \lambda) - (X-Y)|]- \theta (a+1).|\mu - \lambda|
	\end{align*}
	Here we have $X-Y \sim \mathcal{N} \left({\rho\theta(\mu-\lambda), \sigma^2}\right)$, with $\sigma^2 = \sigma_X^2 + \sigma_Y^2$; hence, its absolute value is distributed as the folded normal distribution, with expected value $\xi_{\rho}$ given by: 
	\begin{align*}
	\xi_{\rho} &= \sqrt{\frac{2}{\pi}}\sigma .\mathrm{exp}\left(-\frac{\rho(\lambda-\mu)^{2}}{2(1-\rho)(\lambda+\mu)}.\theta\right) - 
	\rho\theta|\lambda-\mu|.\textrm{erf}\left(\frac{-\rho\theta|\lambda-\mu|}{\sqrt{2\theta.\rho(1-\rho)(\lambda+\mu)}}\right)
	\end{align*}
	Similarly, $(\mu - \lambda) - (X-Y)\sim \mathcal{N} \left({(1-\rho)\theta(\mu-\lambda), \sigma^2}\right)$, which has a similar expression for its expectation $\xi_{1-\rho}$. Next, to simplify the expression for $\xi$, we substitute the following asymptotic approximation of the error function:
	\begin{align*}
	\textrm{erf}(x)=1-\frac{e^{-x^2}}{\sqrt{\pi}x}+\frac{e^{-x^2}}{2\sqrt{\pi}x^3}+o\left(\frac{e^{-x^2}}{x^3}\right)
	\end{align*}
	Substituting, we get that for any normal distribution with mean $m$ and variance $s$, the expectation $\xi$ of the corresponding folded normal distribution is given by: 
	\begin{align*}
	\xi &= \sqrt{\frac{2}{\pi}}.s.e^{-\frac{m^{2}}{2s^2}} + |m|\mathrm{erf}\left(\frac{|m|}{\sqrt{2}s}\right)
	= |m| + \sqrt{\frac{2}{\pi}}\frac{s^3}{m^2}.e^{-\frac{m^{2}}{2s^2}} + o\left(\frac{s^3}{m^2}.e^{-\frac{m^{2}}{2s^2}}\right)
	\end{align*}
	Substituting in the earlier expressions, and recalling that we assume $\rho \leq 1/2$, we get that $\exists\, \; (A, \tilde{A})$ such that:
	
	\begin{align*}
	\gamma^{\theta} &= A.\theta^{-1/2}.e^{-\frac{\rho(\lambda-\mu)^{2}}{2(1-\rho)(\lambda+\mu)}.\theta} + \tilde{A}.\theta^{-1/2}.e^{-\frac{(1-\rho)(\lambda-\mu)^{2}}{2\rho(\lambda+\mu)}.\theta}  + o\left(\theta^{-1/2}.e^{-\frac{\rho(\lambda-\mu)^{2}}{2(1-\rho)(\lambda+\mu)}.\theta}\right)\\
	&=A_2\theta^{-1/2}.e^{-\frac{\rho(\lambda-\mu)^{2}}{2(1-\rho)(\lambda+\mu)}.\theta} + o\left(\theta^{-1/2}.e^{-\frac{\rho(\lambda-\mu)^{2}}{2(1-\rho)(\lambda+\mu)}.\theta}\right),
	\end{align*}
	where $A_2$ is an appropriately chosen constant. This proves the second result.
	
	On the other hand, suppose that $\lambda=\mu$. In this case, we have $\gamma^{\theta} = (a+1)\mathbb{E}[|X-Y|]$, and we can directly use the expression for the mean of the folded normal distribution with mean $0$ and variance $\sigma^2$ to get: 
	\begin{align*}
	\gamma^{\theta} = (a+1)\sqrt{\frac{2{\rho(1-\rho)(\lambda+\mu)}}{\pi}}\cdot\sqrt{\theta}
	\end{align*}
	Thus we have that $\exists\, \; A_1$ such that $\gamma^{\theta} = A_1\theta^{1/2}$.
\end{proof}

\begin{theorem}[Ref. Theorem \ref{thm:nonuniform}]
\label{appthm:nonuniform}
	Given network $G$, demand vector $\Lambda$ and demand-split vector $\rho$, assume the following hold:
	\begin{itemize}
		\item $\Lambda \in C_{\eta}$ for some $\eta \in \mathcal{E}$ (i.e., the monopolist demand vector lies in the interior of a dual cone)
		\item There is a r.v. $Z$ such that $\Lambda^{\theta,a} = \theta\rho\odot\Lambda + \sqrt{\theta}Z$ with $\mathbb{E}[Z_e]=0\,\forall\,e\in E$ and $\mathbb{E}[||Z||_1]<\infty$
		\item $\forall\,e\in E$, we have $\mathbb{P}[|Z_e|>t]=\mathcal{O}(f(t))$ with $f(t)=\mathcal{O}(t^{-1})$
	\end{itemize}
	Now suppose we define: 
	$$ L=\alpha^\intercal\rho\odot\Lambda+\beta^\intercal(1-\rho)\odot\Lambda-\eta^\intercal\Lambda$$
	Then we have: 
	\begin{itemize}
		\item $\rho\odot\Lambda \in \mathring{C}_{\alpha} \; and \; (1-\rho)\odot\Lambda \in \mathring{C}_{\beta}\Rightarrow \gamma^{\theta}= L\theta +\mathcal{O}(\theta f(\sqrt{\theta}))$
		\item $\rho\odot\Lambda \in C_{\alpha}\setminus\mathring{C}_{\alpha} \; or \; (1-\rho)\odot\Lambda \in C_{\beta}\setminus\mathring{C}_{\beta} \Rightarrow \gamma^{\theta} = L\theta + \Theta(\theta^{1/2})$
	\end{itemize}
\end{theorem}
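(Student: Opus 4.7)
The plan is to adapt the arguments of Theorems~\ref{thm:expdecay} and \ref{thm:sqroot} by treating each firm independently, since the inhomogeneous splitting no longer keeps both firms' expected demand vectors on the same ray as the monopolist's demand. Let $\alpha_1,\alpha_2 \in \mathcal{E}$ denote the (random) optimal dual corner points for the realized demands $\Lambda^{\theta,a}$ and $\Lambda^{\theta,b}$. Writing $\Lambda^{\theta,a} = \theta\rho\odot\Lambda + \sqrt{\theta}Z$ and $\Lambda^{\theta,b} = \theta(1-\rho)\odot\Lambda - \sqrt{\theta}Z$, I decompose
\begin{align*}
\gamma^\theta
&= \theta\,\mathbb{E}\!\left[\alpha_1^\intercal(\rho\odot\Lambda) + \alpha_2^\intercal((1-\rho)\odot\Lambda)\right] - \theta\eta^\intercal\Lambda + \sqrt{\theta}\,\mathbb{E}\!\left[(\alpha_1-\alpha_2)^\intercal Z\right].
\end{align*}
Crucially, because $\rho\odot\Lambda$ and $(1-\rho)\odot\Lambda$ need not lie in $C_\eta$, the natural ``good-event'' values of $\alpha_1,\alpha_2$ are $\alpha$ and $\beta$, and these differ from $\eta$; substituting them yields exactly the leading term $L\theta$.

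Next, I would estimate each firm's contribution separately. For firm $a$, if $\rho\odot\Lambda \in \mathring{C}_\alpha$, I reuse the geometric argument from Theorem~\ref{thm:expdecay}: there is a strictly positive distance $\delta_a>0$ from $\rho\odot\Lambda$ to the nearest cone boundary, so the ball $B(\theta\rho\odot\Lambda,\theta\delta_a)$ sits inside $C_\alpha$. On the good event $\{\|Z\|_\infty \leq \sqrt{\theta}\delta_a\}$ we have $\alpha_1 = \alpha$ deterministically, and using $\mathbb{E}[Z]=0$ together with the union-bound/tail-bound estimates exactly as in that proof yields
\begin{align*}
\mathbb{E}[RC(\Lambda^{\theta,a})] = \theta\alpha^\intercal(\rho\odot\Lambda) + \mathcal{O}\!\left(\theta f(\sqrt{\theta})\right).
\end{align*}
If instead $\rho\odot\Lambda$ lies on the boundary between $C_\alpha$ and some other cone $C_{\alpha'}$, the argument of Theorem~\ref{thm:sqroot} (applied with the single demand vector $\Lambda^{\theta,a}$) shows that the fluctuations $\sqrt{\theta}Z$ land on either side of the boundary with constant probability, contributing an \emph{additional} $\Theta(\sqrt{\theta})$ term coming from $\mathbb{E}\!\left[|(\alpha-\alpha')^\intercal Z|\right]>0$. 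The identical analysis with $-Z$ in place of $Z$ handles firm $b$, using the corresponding $\delta_b$ and $\beta$.

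Summing the two firm-level expansions and subtracting $\theta\eta^\intercal\Lambda$ gives $\gamma^\theta = L\theta + (\text{fluctuation})$, where the fluctuation is $\mathcal{O}(\theta f(\sqrt{\theta}))$ if both $\rho\odot\Lambda$ and $(1-\rho)\odot\Lambda$ are in the respective cone interiors, and $\Theta(\sqrt{\theta})$ if at least one lies on a boundary. For the general case of more than two cones meeting at a vertex, the same trick from the end of Theorem~\ref{thm:sqroot}'s proof applies: one restricts to any two relevant cones for the lower bound, and sums the pairwise $|(\alpha-\alpha')^\intercal Z|$ contributions for the upper bound.

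The main obstacle I anticipate is ensuring that the $\sqrt{\theta}$-order cross term $\sqrt{\theta}\,\mathbb{E}[(\alpha_1-\alpha_2)^\intercal Z]$ does not cancel the leading $L\theta$ term or produce unwanted sign errors. This requires showing that, on the bad event, the contributions of $\alpha_1$ and $\alpha_2$ to the $\theta$-order term are at most $\mathcal{O}(\theta\,\mathbb{P}[\text{bad}])=\mathcal{O}(\theta f(\sqrt{\theta}))$ -- an absolute Hölder bound as used in Theorem~\ref{thm:expdecay} -- and that on the good event the cross term reduces to a well-defined constant times $\sqrt{\theta}$ via $\mathbb{E}[Z]=0$. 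Once this is done, all error terms are properly absorbed into the regimes promised by the theorem.
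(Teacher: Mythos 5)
Your proposal is correct and follows essentially the same route as the paper's proof: treat each firm separately, reuse the interior-cone concentration argument of Theorem~\ref{thm:expdecay} and the boundary argument of Theorem~\ref{thm:sqroot} for the respective firms, and sum the two expansions against $\theta\eta^\intercal\Lambda$ to isolate the $L\theta$ leading term. The cancellation concern you raise at the end is resolved exactly as the paper does it, via per-firm upper bounds (H\"older plus tail bounds on the bad event) and per-firm lower bounds from dual suboptimality combined with $\mathbb{E}[Z]=0$.
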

\begin{proof}
	The proof is the same as the homogeneous case, except that each companies' upper and lower bounds have to be treated separately. 
	
	Suppose we are in the first case. We can define $\delta$ in the first case and use concentration inequalities as before. In particular, we have for the first firm that $RC(\Lambda^{\theta,a})=\alpha_1^\intercal \Lambda^{\theta,a}$, and moreover:
	\begin{align*}
	\mathbb{E}[\alpha_1^\intercal \Lambda^{\theta,a}]&=\mathbb{E}\left[\alpha^\intercal \Lambda^{\theta,a}\mathds{1}_{\{\forall e\in E, \, Z_e \leq \sqrt{\theta}\delta\}}\right] + 
	\mathbb{E}\left[\alpha_1^\intercal \Lambda^{\theta,a}\mathds{1}_{\{\exists\, e\in E \mbox{ s.t.} \, Z_e > \sqrt{\theta} \delta\}}\right]\\
	&\leq \theta\alpha^\intercal\rho\odot\Lambda + 
	\Big( \theta \bar{\beta}||\Lambda||_1 + \bar{\beta}\sqrt{\theta}\mathbb{E}\left[\ || Z ||_{1} \; |\;\exists\, e\in E \mbox{ s.t.} \, Z_e >\sqrt{\theta}\delta\right]\Big)
	\mathbb{P}\left[\exists\, e\in E \mbox{ s.t.} \, Z_e > \sqrt{\theta}\delta\right]\\
	&=\theta\alpha^\intercal\rho\odot\Lambda+\mathcal{O}\left(\theta f(\sqrt{\theta})\right)
	\end{align*}
	The upper bounds use the exact same argument as before. Thus we get a fast decay for one company, and if we sum each term of $\gamma^{\theta}$ then we get the desired result for the first case.
	\hfill \break
	For the second case, we only need only one out of two companies to have a square root decay. Let's assume that the company $b$ is under the affected regime. Concerning the lower bound, the cost for firm $a$ can be lower bounded by its limit since it is suboptimal. Thus we have the following bounding box for any company that is under the fragmentation-resilient regime, thanks to the previous point:
 $$ \theta\alpha^\intercal \rho\odot\Lambda + \mathcal{O}\left(\theta f(\sqrt{\theta})\right) \geq \mathbb{E}[\alpha_1^\intercal \Lambda^{\theta,a}]\geq \mathbb{E}[\alpha^\intercal \Lambda^{\theta,a}]=\theta\alpha^\intercal \rho\odot\Lambda $$ 
 
 Concerning firm $b$, the argument from Theorem~\ref{thm:sqroot} can be repeated. Let's assume there are only 2 active corner points (we can adapt the proof as mentioned in \ref{thm:sqroot}) denoted $\beta$ and $\tilde{\beta}$. We have for the lower bound:
	\begin{align*}
	\mathbb{E}[\alpha_2^\intercal \Lambda^{\theta,b}]
	&\geq \mathbb{E}\big[(\tilde{\beta}^\intercal\Lambda^{\theta,b}).\mathds{1}_{\tilde{\beta}^\intercal Z \geq \beta^\intercal Z}\,+
	(\beta^\intercal\Lambda^{\theta,b}).\mathds{1}_{\beta^\intercal Z \geq \tilde{\beta}^\intercal Z}\big]\\
	&= \theta\beta^\intercal(1-\rho)\odot\Lambda 
	+\theta^{1/2}.\mathbb{E}\left[(\tilde{\beta}^\intercal Z).\mathds{1}_{\tilde{\beta}^\intercal Z \geq \beta^\intercal Z} + (\beta^\intercal Z) \left(1 - \mathds{1}_{\tilde{\beta}^\intercal Z \geq \beta^\intercal Z}\right)\right]\\
	&= \theta\beta^\intercal(1-\rho)\odot\Lambda +
	\theta^{1/2}.\mathbb{E}\left[((\tilde{\beta}-\beta)^\intercal Z).\mathds{1}_{\tilde{\beta}^\intercal Z \geq \beta^\intercal Z} + (\beta^\intercal Z) \right]\\
	&= \theta\beta^\intercal(1-\rho)\odot\Lambda +
	\theta^{1/2}.\mathbb{E}\left[((\tilde{\beta}-\beta)^\intercal Z).\mathds{1}_{\tilde{\beta}^\intercal Z \geq \beta^\intercal Z} \right]\\
	&= \theta\beta^\intercal(1-\rho)\odot\Lambda + \Omega(\theta^{1/2})
	\end{align*}
	
Concerning the higher bound of the fragmented-affected company, we consider the radius $\delta$ of a ball strictly included in $C_{\beta}\cup C_{\tilde{\beta}}$ and write:
	\begin{align*}
	\mathbb{E}[\alpha_2^\intercal \Lambda^{\theta,b}]
	&=\mathbb{E}\Big[(\alpha_2^\intercal \Lambda^{\theta,b})\times\left(\mathds{1}_{\{\forall\, e\in E,\,  Z_e \leq \sqrt{\theta}\delta\}} +\mathds{1}_{\{\exists\, e\in E \mbox{ s.t.} \, Z_e > \sqrt{\theta}\delta\}}\right)\Big]\\
	&\leq \mathbb{E}\left[(\alpha_2^\intercal \Lambda^{\theta,b}) \mathds{1}_{\{\forall\,e\in E,\, Z_e \leq \sqrt{\theta} \delta\}}\right] +\mathcal{O}(\theta f(\sqrt{\theta}))\\
	&\leq \mathbb{E}\left[(\alpha_2^\intercal \Lambda^{\theta,b})\right] +\mathcal{O}(\theta f(\sqrt{\theta}))\\
	&= \theta\beta^\intercal(1-\rho)\odot\Lambda + \theta^{1/2} \mathbb{E}\left[(\beta+\tilde{\beta})^\intercal Z)\right] +\mathcal{O}(\theta f(\sqrt{\theta}))\\
	&= \theta\beta^\intercal(1-\rho)\odot\Lambda + \mathcal{O}(\theta^{1/2})
	\end{align*}

  Finally, what remains to do is sum over both company, subtract the cost of the monopoly regime and we get the desired conclusion.
\end{proof}

\begin{theorem}
	\label{thm:anycompany}
	Given network $G$, demand vector $\Lambda$, $n$ companies and demand-split vector $(\rho_1,...,\rho_n)$ such that their sum is the vector $\mathds{1}_n$, assume the following hold:
	\begin{itemize}
		\item $\Lambda \in C_{\eta}$ for some $\eta \in \mathcal{E}$ (i.e., the monopolist demand vector lies in the interior of a dual cone)
		\item There is a r.v. $Z$ such that, $\forall i, \; \Lambda^{\theta,i} = \theta\rho_i\odot\Lambda + \sqrt{\theta}Z^i$ with $\mathbb{E}[Z_e^i]=0\,\forall\,e\in E$ and $\mathbb{E}[||Z^i||_1]<\infty$
		\item $\forall\,e\in E$, we have $\mathbb{P}[|Z_e^i|>t]=\mathcal{O}(f(t))$ with $f(t)=\mathcal{O}(t^{-1})$
		\item $ \forall i, \; \rho_i\odot\Lambda \in \mathcal{C}_{\alpha_i}$
	\end{itemize}
	Now suppose we define: 
	$$ L=\left( \sum_{i=1}^n \alpha_i^\intercal \rho_i\odot\Lambda \right) - \eta^\intercal\Lambda$$
	Then we have: 
	\begin{itemize}
		\item $\forall i,\; \rho_i\odot\Lambda \in \mathring{C}_{\alpha} \; \Rightarrow \gamma^{\theta}= L\theta +\mathcal{O}(\theta f(\sqrt{\theta}))$
		\item $\exists i, \; \rho_i\odot\Lambda \in C_{\alpha}\setminus\mathring{C}_{\alpha} \Rightarrow \gamma^{\theta} = L\theta + \Theta(\theta^{1/2})$
	\end{itemize}
\end{theorem}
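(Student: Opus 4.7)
The plan is to generalize the proof of Theorem~\ref{thm:nonuniform} from two firms to $n$ firms by exploiting the fact that, in the proofs of Theorems~\ref{thm:expdecay}, \ref{thm:sqroot}, and \ref{thm:nonuniform}, the bounds on each firm's expected rebalancing cost $\mathbb{E}[RC(\Lambda^{\theta,i})]$ were derived separately before being summed. I will split
$$\gamma^{\theta} = \sum_{i=1}^{n}\mathbb{E}[RC(\Lambda^{\theta,i})] - RC(\theta\Lambda) = \sum_{i=1}^{n}\mathbb{E}[RC(\Lambda^{\theta,i})] - \theta\,\eta^\intercal\Lambda,$$
so the task reduces to estimating $\mathbb{E}[RC(\Lambda^{\theta,i})]$ for each firm as a function of whether $\rho_i\odot\Lambda$ lies in the interior of $C_{\alpha_i}$ or on its boundary, and then collecting the leading $L\theta$ term plus the appropriate error.

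First, for every firm $i$ with $\rho_i\odot\Lambda\in\mathring C_{\alpha_i}$, I will mimic the concentration argument of Theorem~\ref{thm:expdecay} (and its re-use in the first case of Theorem~\ref{appthm:nonuniform}). Define $\delta_i>0$ to be the distance of $\rho_i\odot\Lambda$ to the nearest dual hyperplane, so that $B(\theta\rho_i\odot\Lambda,\theta\delta_i)\subset C_{\alpha_i}$. On the event $\{\forall e,\; |Z^i_e|\le\sqrt{\theta}\delta_i\}$ we have $RC(\Lambda^{\theta,i})=\alpha_i^\intercal\Lambda^{\theta,i}$, and on the complementary event a union bound over the $|E|$ coordinates combined with the tail hypothesis $\mathbb{P}[|Z^i_e|>t]=\mathcal{O}(f(t))$ together with $\bar\beta\triangleq\max_{\beta\in\mathcal E}\|\beta\|_\infty$ yields
$$\mathbb{E}[RC(\Lambda^{\theta,i})] = \theta\,\alpha_i^\intercal\rho_i\odot\Lambda + \mathcal{O}\!\left(\theta f(\sqrt{\theta})\right).$$

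Next, for any firm $i$ with $\rho_i\odot\Lambda\in C_{\alpha_i}\setminus\mathring C_{\alpha_i}$, say $\rho_i\odot\Lambda\in C_{\alpha_i}\cap C_{\tilde\alpha_i}$, I will reproduce the two-sided argument from the second case of Theorem~\ref{appthm:nonuniform}. The lower bound comes from dropping all but the two active corner points, which after the usual telescoping over the indicators $\mathds{1}_{\tilde\alpha_i^\intercal Z^i\ge\alpha_i^\intercal Z^i}$ gives $\mathbb{E}[RC(\Lambda^{\theta,i})]\ge \theta\alpha_i^\intercal\rho_i\odot\Lambda + \theta^{1/2}\,\mathbb{E}[|(\tilde\alpha_i-\alpha_i)^\intercal Z^i|]$, which is $\theta\alpha_i^\intercal\rho_i\odot\Lambda+\Omega(\theta^{1/2})$. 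The matching upper bound follows by placing a ball of radius $\delta_i$ fitting inside $C_{\alpha_i}\cup C_{\tilde\alpha_i}$, controlling the exceptional event by $\mathcal{O}(\theta f(\sqrt{\theta}))=\mathcal{O}(\theta^{1/2})$, and bounding the expectation on the good event by $\theta\alpha_i^\intercal\rho_i\odot\Lambda+\mathcal{O}(\theta^{1/2})$. Summing the contributions of all firms and subtracting $\theta\eta^\intercal\Lambda$ yields $\gamma^\theta = L\theta + \mathcal{O}(\theta f(\sqrt{\theta}))$ in the first case and $\gamma^\theta = L\theta+\Theta(\theta^{1/2})$ in the second, since a fragmentation-resilient contribution from one firm is absorbed into the $\Theta(\theta^{1/2})$ error of a fragmentation-affected firm.

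The main subtlety I expect is the case where the boundary of $C_{\alpha_i}$ is the intersection of more than two cones: in that situation the argument sketched in Theorem~\ref{thm:sqroot} must be applied, namely lower-bounding by considering only two of the active corner points (which still gives $\Omega(\theta^{1/2})$), and upper-bounding by summing $\mathbb{E}[|(\alpha-\beta)^\intercal Z^i|]$ over all pairs of cones active at $\rho_i\odot\Lambda$ (which remains $\mathcal{O}(\theta^{1/2})$ since the list of pairs is finite). A second minor point is that the $Z^i$ across firms need not be independent (for instance, under multinomial splitting they satisfy $\sum_i \sqrt{\theta} Z^i = 0$), but since our argument only uses marginal tail bounds on each $Z^i$ and linearity of expectation, this dependence is harmless.
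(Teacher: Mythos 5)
Your proposal is correct and follows essentially the same route as the paper: decompose $\gamma^{\theta}$ into per-firm expected costs, apply the interior-cone concentration bound or the boundary two-corner-point bound from the proof of Theorem~\ref{thm:nonuniform} to each firm separately, and sum. Your added remarks on handling more than two active corner points and on the harmlessness of dependence among the $Z^i$ are correct refinements of details the paper leaves implicit.
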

\begin{proof}
The previous proof gives bounds for each company depending on the underlying regime of their expected demand. We make distinction by denoting $\alpha_i$ the corner point of the expected demand, and $\xi_i$ the corner point of the random variable $\Lambda^{\theta,i}$. We have for the resilient and affected regime respectively:
\begin{align*}
\theta\alpha_i^\intercal \rho_i\odot\Lambda + \mathcal{O}\left(\theta f(\sqrt{\theta})\right) \geq \mathbb{E}[\xi_i^\intercal \Lambda^{\theta,i}]\geq \theta\alpha_i^\intercal \rho_i\odot\Lambda 
\\
\theta\alpha_i^\intercal \rho_i\odot\Lambda + \mathcal{O}(\theta^{1/2}) \geq \mathbb{E}[\xi_i^\intercal \Lambda^{\theta,i}]\geq \theta\alpha_i^\intercal \rho_i\odot\Lambda + \Omega(\theta^{1/2})
\end{align*}
 Thus depending on the regime, we get different bounds that can then be summed to obtain this result.
\end{proof}

\section{Rebalancing and Stochastic Dynamics}
\label{appsec:rebalancing}

As we mention in Section~\ref{sec:setting}, the formulation of the rebalancing cost in terms of the min-cost circulation problem arises in many models of MoD systems. We now discuss three different justifications given for it in literature,
\begin{enumerate} 
\item Best-case scenarios: For given $\Lambda_{ij}$, the total number of trip requests between pairs of nodes $(i,j)$, $RC(\Lambda)$ represents the \emph{minimum ex-post transportation cost}, i.e., the minimum cost of rebalancing empty vehicles to meet the demand. This is particularly relevant for fragmentation-affected settings, as the true cost in such cases must grow as fast as the LP.
\item Long-term planning models: These arise in settings where the platform knows all trip requests in advance, e.g., reservation models like Lyft Shuttle. They are also valid to an extent in settings where passengers have high patience~\cite{ZP14,PSFR12,SGF15}.

In more detail: suppose trip requests between pairs of nodes $i,j$ arrive according to a Poisson process with rate $\Lambda_{ij}$. Then, for an MoD platform with $m$ vehicles serving $n$ stations, to satisfy all demands, the number of vehicles must satisfy the following {\em stability condition} (via a simple time-conservation argument):  
\begin{align*}
m > \sum_{(i,j)\in E}\Lambda_{ij}\tau_{ij} + RC(\Lambda),
\end{align*}
When passengers have infinite patience (i.e., are willing to tolerate any finite delay), then it is not hard to show that the condition is also a sufficient condition -- in particular, the system is stabilized by a natural static probabilistic rebalancing policy (cf. \cite{TPF13}). 

The argument for settings with reservations is similar, except that the trip planning can now be done in advance rather than in an online manner.

\item Closed-queueing networks and passenger loss models: This is the opposite scenario from the second one, where passengers are assumed to be very impatient, and leave from the system if no vehicle is available nearby. In more detail, consider a setting with Poisson arrivals and $0$ patience (i.e., where passengers leave the system immediately if there is no available vehicle). Note that in this setting, the solution to the LP $RC(\Lambda)$ suggests a natural probabilistic rebalancing policy, where we rebalance every vehicle in the ratio of the rebalancing flow to the total flow of vehicles from that node. 

Given the passenger loss model, it is clear that any policy must incur some demand loss. Somewhat surprisingly, however, Banerjee et al.~\cite{BFL17} show that the probabilistic rebalancing policy is near-optimal under the large-market scaling, even after accounting for this demand loss. In particular, they prove that as long as $\sum_{i,j}\Lambda_{ij}\tau_{ij} = o(m)$, then static rebalancing with probabilities chosen via the LP in \eqref{eq:primal} incurs a loss which is at most a $(1+n/m)$ fraction of $RC(\Lambda)$. 

To translate this to our setting, note that all our results consider PoF scaling with respect to the demand, and are independent of the number of vehicles. For the fragmentation-affected setting, our results still hold since $RC(\Lambda)$ is a lower bound on the true costs. On the other hand, the result of~\cite{BFL17} shows that the true cost under probabilistic rebalancing is only off from $RC(\Lambda)$ by a factor of $(1+n/m)$. Now, as long as $m$ scales slightly faster than the demand scaling, we still get vanishing PoF in the fragmentation-resilient setting.
\end{enumerate}

\section{Adversarial Price of Fragmentation}
\label{appsec:advpof}

Parallel to the study of the stochastic PoF, we analyze the worst-case scenario - which we define as the \textit{Adversarial Price of Fragmentation} - to check if the increase due to competition is negligible or not. One way to do so is to consider rebalancing costs over all feasible demand splits, i.e.: 
\begin{align}
& \max \quad
RC(\Lambda^a) \; + \; RC(\Lambda - \Lambda^a) \label{eq:worstpof}\\
& \mathrm{s.t.} \qquad 0 \leq \Lambda_{ij}^a \leq \Lambda_{ij}\quad\forall\,(i,j)\in E \nonumber
\end{align}
This captures the worst-case demand-split in terms of rebalancing costs, assuming firms act myopically to minimize their costs. 
Though such a formulation is appealing for empirical analysis, it however leads to an intractable problem; in particular, as a consequence of Proposition~\ref{prop:convex}, we have that the function $RC(\Lambda^a) + RC(\Lambda - \Lambda^a)$ is convex in $\Lambda^a$, and hence maximizing it over a polyhedron is in general computationally hard.
 
To circumvent this, we use the following heuristic: First, since the polyhedron on which we maximize over is bounded and the cost function is convex, we deduce that the worst case split is reached on a corner point of the polyhedron. This polyhedron is isomorphic to $[0,1]^{N^{2}}$, so this maximization problem is equivalent to an ILP in $\{0,1\}^{N^{2}}$. For this, we utilize a projected subgradient heuristic for the dual problem, taking advantage of the cost function being continuous and convex, and differentiable almost everywhere. In particular, note that we can write the dual as:
$ \underset{\alpha \in \mathcal{E}}{\max} \; \alpha^\intercal \Lambda
$, where $\mathcal{E}$ denotes the -finite- set of corner points; hence it is differentiable except on degenerate values of $\Lambda$. 
We obtain an admissible value of the sub gradient via a first order Taylor expansion between two points of a plane, and update our solution to a new corner point depending on the sign of the gradient.
Let $f(\kappa),\,\kappa \in [0,1]^{N^2}$ denote the PoF under the demand-split $\Lambda^a = \{\kappa_{i,j}\Lambda_{i,j}\}$. 

\noindent\textbf{Adversarial PoF for NYC Data}:
Even though the above heuristic does not guarantee optimality at convergence, using it on the NYC data, we obtain a fixed point corresponding to a rebalancing cost increase of 567\% compared to the monopoly (for the demand data between noon and 1 pm on May 10th). This observation shows that demand fragmentation can lead to large systemic losses. Furthermore, note that when considering the worst case loss for a higher number of companies across which the demand may fragment, we know that the results can only be worse (i.e. the duopoly gives a lower bound).

\section{Additional Plots}
\label{appsec:plots}

In this appendix, we present some additional plots. In Figure.~\ref{fig:cluster20}, \ref{fig:cluster40}, \ref{fig:cluster60} and \ref{fig:cluster80}, we show the locations of the cluster centers, for different choices of number of stations.

\begin{figure*}[h!]
	\centering
	\subfigure[Inhomogeneous Market-Shares]{
		\label{fig:3comp}
		\includegraphics[width=0.45\columnwidth]{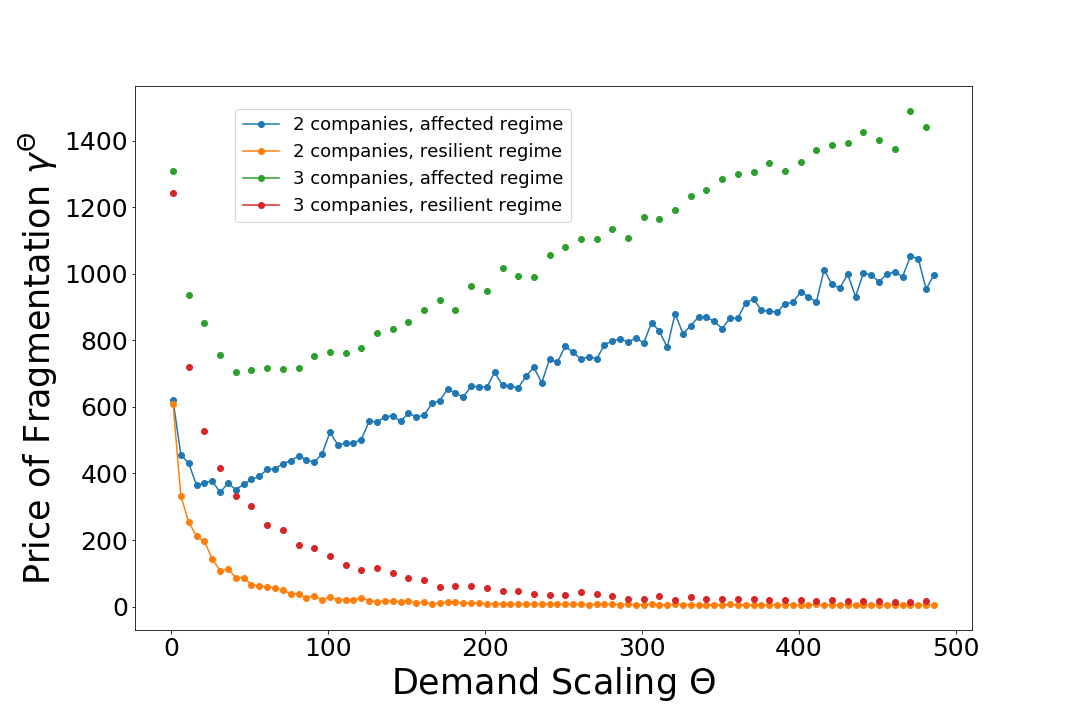}
	}
	\hspace{1cm}
	\subfigure[Effect of Multiple Firms]{
		\includegraphics[width=0.45\columnwidth]{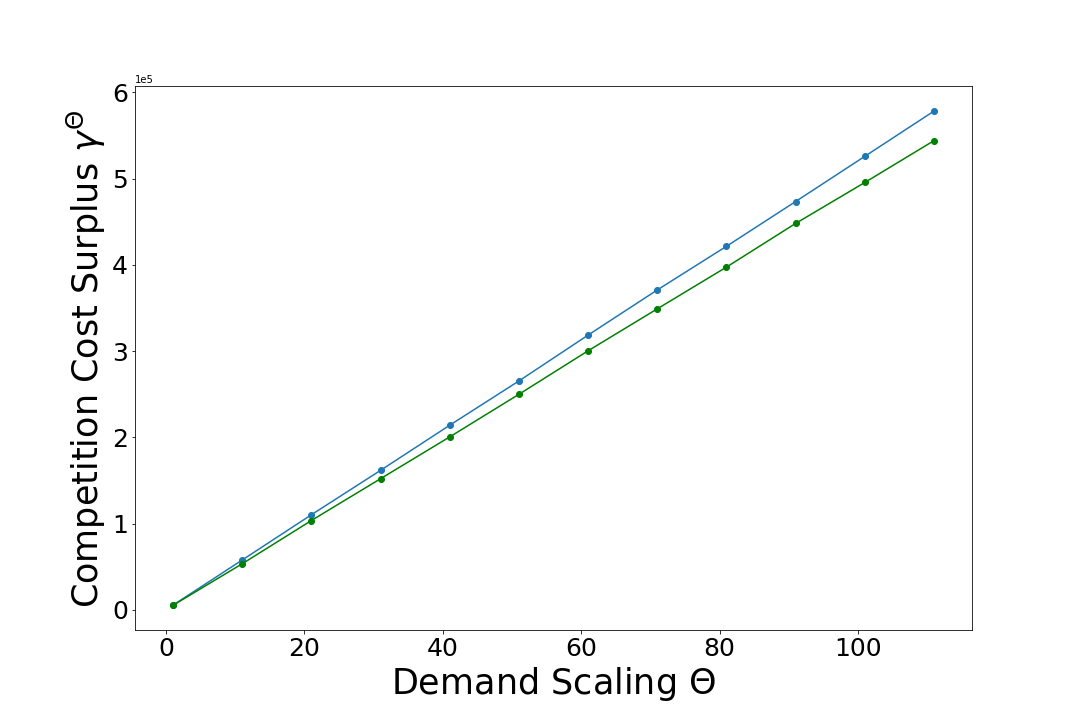}
		\label{fig:inhomogen}
	}
	\caption[Other settings of PoF scaling]{Multiple firms and heterogeneous market-shares: In Fig.~\ref{fig:inhomogen}, we demonstrate the PoF scaling with heterogeneous market shares (chosen uniformly at random on each edge). The plot clearly demonstrates the linear scaling of PoF with $\theta$. Note that the y-axis is in multiples of $10^5$; this shows that inhomogeneous demand-splits (and therefore adversarial) can have much higher PoF. Similarly, in Fig.~\ref{fig:3comp}, we demonstrate the change in PoF scaling when going from $2$ to $3$ firms; note that the situation gets worse with $3$ firms, but qualitatively, the scaling behavior is similar. Both plots are based on the same demand distribution data as in \ref{fig:intro}.}
	\label{fig:othercases}
\end{figure*}

\begin{figure*}[h!]
	\centering
	\subfigure[20 stations]{
		\includegraphics[width=0.45\columnwidth]{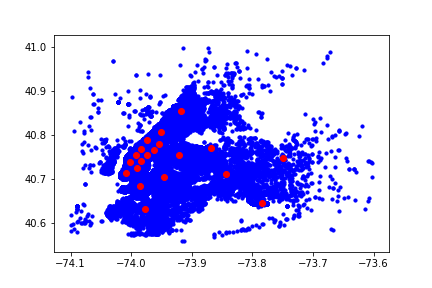}
		\label{fig:cluster20}
	}
	\hspace{1cm}
	\subfigure[40 stations]{
		\label{fig:cluster40}
		\includegraphics[width=0.45\columnwidth]{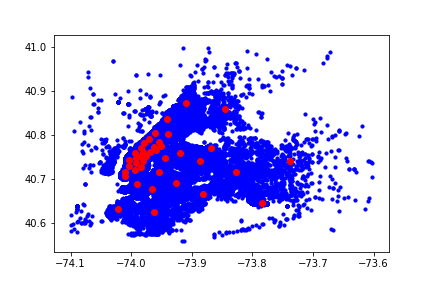}
	}
	
	\subfigure[60 stations]{
		\label{fig:cluster60}
		\includegraphics[width=0.45\columnwidth]{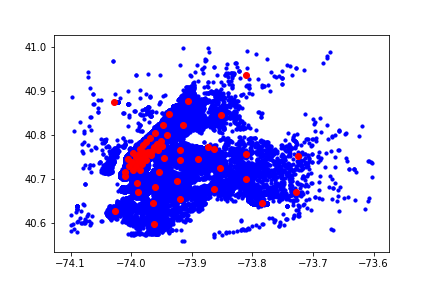}
	}
	\hspace{1cm}
	\subfigure[80 stations]{
		\label{fig:cluster80}
		\includegraphics[width=0.45\columnwidth]{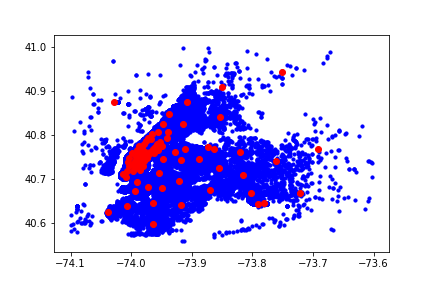}
	}
	\caption[Demand clustering]{Scatter plot of the demand locations with the position of clusters for increasing number of stations.}
	\label{fig:cluster}
\end{figure*}


Next, we simulate and plot the PoF for inhomogeneous market shares, using the same data as \ref{fig:intro}, with demand-split ratios picked uniformly at random for each edge. The splits we picked result in a non-zero scaling coefficient $L$ (defined in Theorem~\ref{thm:nonuniform}; note that our theorem predicts a linear divergence, which is confirmed by the plot in Fig.~\ref{fig:inhomogen}. This suggests that spatially inhomogeneous demand-splitting can lead to much higher costs. Finally, in Figure.~\ref{fig:3comp}, we compare the two fragmentation regimes when the number of firms is either 2 or 3. Note that the phase transition is qualitatively the same (though quantitatively worse)  with more firms. 
The data for both plots is the same data as \ref{fig:intro}.



\end{document}